\documentclass{article}
\PassOptionsToPackage{table}{xcolor}
\usepackage{arxiv}
\renewcommand{\headeright}{}
\renewcommand{\undertitle}{}

\usepackage[utf8]{inputenc}
\usepackage[T1]{fontenc}
\usepackage{amsmath, amssymb, amsthm}
\usepackage{graphicx}
\usepackage{booktabs}
\usepackage{tabularx}
\usepackage{longtable}
\usepackage{multirow}
\usepackage{array}
\usepackage{colortbl}
\usepackage{tikz}
\usetikzlibrary{positioning, arrows.meta, calc, decorations.pathreplacing, shapes.geometric, backgrounds, fit}
\usepackage{csquotes}
\usepackage[numbers, sort&compress]{natbib}
\usepackage{hyperref}
\hypersetup{colorlinks=true, linkcolor=black, citecolor=black, urlcolor=black}

\definecolor{purple600}{HTML}{534AB7}
\definecolor{teal600}{HTML}{1D9E75}
\definecolor{coral600}{HTML}{D85A30}
\definecolor{amber600}{HTML}{BA7517}
\definecolor{blue600}{HTML}{185FA5}
\definecolor{green600}{HTML}{3B6D11}
\definecolor{gray50}{HTML}{F1EFE8}
\definecolor{gray200}{HTML}{B4B2A9}
\definecolor{gray600}{HTML}{5F5E5A}
\definecolor{red600}{HTML}{C42B1C}
\definecolor{gold600}{HTML}{B8860B}

\newtheorem{definition}{Definition}[section]

\newtheorem{proposition}{Proposition}[section]
\newtheorem{theorem}{Theorem}[section]

\title{The Dignity-Centric Stack:\\[2pt] A Commons-Governed, Horizontally Federated Architecture for Human-Dignity AI}

\author{%
  Eduardo C\'{e}sar Garrido Merch\'{a}n\\
  Department of Quantitative Methods\\
  Universidad Pontificia Comillas (ICADE / IIT), Madrid\\
  \texttt{ecgarrido@comillas.edu}\\
}

\date{2026 \quad|\quad Preprint}

\begin{document}
\maketitle

\begin{abstract}
\noindent
The human-dignity-centric digital social contract grounds personal data in human dignity, data personalism, and data sovereignty, and articulates six dimensions of data governance: technological oversight, automation limits, economic justice, political legitimacy, social cohesion, and legal guarantees. It presupposes, however, that enforcement falls to State regulators, licensed fiduciaries, and multi-stakeholder bodies embedded in existing legal systems. This paper asks whether its normative content can instead be realized not as rules imposed on the owners of the AI stack from without, but as a commons-governed infrastructure that any person, firm, or State may use and fund while its governance stays horizontal, polycentric, and subsidiary. We construct the \emph{Dignity Stack}, a six-layer architecture mapping each dimension onto a layer of commons-governed AI infrastructure, with protocols drawn from the Liberation Stack framework and from the cooperative, mutualist, and libertarian-municipalist traditions. The commons is State-agnostic rather than anti-State, anarchist in its horizontal means but not in the abolition of the State. Its central device is a decoupling of capital from control, by which the stack functions as a shared civic battery, charged by many contributors yet steered by none in proportion to its charge. We prove that this defeats formal capture through votes or surplus, and show that structural capture, the leverage of a dominant supplier free to withdraw what it provides, is resisted only insofar as operational supply is polycentric and substitutable, a condition demanding at the lower layers and perhaps presently unattainable at chip fabrication. We conclude, with explicit attention to its limits, that commons-governed AI realizes the values the contract proclaims more faithfully than the regulation it presupposes.

\vspace{8pt}
\noindent\textbf{Keywords:} digital social contract, data personalism, commons-governed AI, commons-based peer production, polycentric governance, subsidiarity, horizontal governance, capital--governance decoupling, data sovereignty, human dignity, Liberation Stack, Dignity-Centric Stack, convivial tools, mutual aid, federated governance.
\end{abstract}

\section{Introduction: The Enforcement Problem in the Digital Social Contract}
\label{sec:intro}

The proposition that personal data constitutes a fundamental human right, advanced with systematic rigour by Alvarez-Pallete et al.~\citep{alvarezpallete2026}, represents a significant philosophical advance over the prevailing regulatory paradigm, which treats data protection as a consumer right to be balanced against commercial interests. By grounding data governance in the Kantian categorical imperative, in Mounier's personalism, and in the concept of contextual integrity, the digital social contract framework elevates the discussion from regulatory technique to political philosophy: the question is no longer how to constrain data extraction within acceptable limits but whether data extraction as such is compatible with the irreducible dignity of the human person.

Yet the framework, precisely because of its philosophical ambition, generates a tension that its authors do not resolve. The normative content of the contract demands that persons never be treated merely as means to algorithmic ends, that consent be meaningful rather than nominal, that economic value flow back to data subjects, and that governance be participatory rather than delegated. The institutional implementation the contract proposes, however, relies on State regulatory authorities to enforce fiduciary duties, on licensed intermediaries operating within existing legal systems, on multi-stakeholder governance bodies embedded in representative democratic structures, and on courts to provide remedies. The question that this paper takes as its point of departure is whether these institutional forms are adequate to the normative demands they are meant to serve, or whether the very structures the contract relies upon for enforcement are implicated in the violations it seeks to prevent.

The question is not rhetorical. The State that is called upon to enforce data fiduciary duties is the same State whose intelligence agencies conduct mass surveillance, whose patent law enforces the monopolies that concentrate AI infrastructure, whose procurement policies channel public funds toward corporate data centres, and whose export controls weaponize the chip supply chain. The licensed intermediaries that are to serve as data trustees operate within a legal system whose corporate governance norms structurally favour capital over labour and whose fiduciary law has historically served the interests of wealth preservation. The multi-stakeholder governance bodies that are to ensure democratic participation reproduce the same pathologies of captured consultation and manufactured consent that characterize regulatory proceedings in telecommunications, finance, and environmental policy. To note these structural entanglements is not to impugn the motives of the contract's authors but to observe that the institutional mechanisms they propose operate within, and are shaped by, the very power asymmetries they are designed to correct.

This paper develops an alternative institutional architecture for the digital social contract, one that preserves and indeed strengthens its normative commitments while shifting the locus of enforcement from statist regulation imposed from above to commons-governed, self-organizing institutions whose authority is held horizontally by their participants. The alternative is not a thought experiment but a systematic construction built on actually existing organizational models: cooperative data trusts, commons-governed infrastructure, municipal assemblies, mutual-aid networks, and federated protocols, all of which have demonstrated their operational viability at non-trivial scale. The relation to the State is not antagonistic but subsidiary: the commons governs at the level closest to those affected, and the State, like any other actor, may use the infrastructure and contribute to it without thereby acquiring control over it. The central claim is that the normative content of the digital social contract, specifically the irreducibility of dignity, the requirement of meaningful consent, the prohibition on instrumentalization, and the demand for economic justice, is not merely compatible with commons-based institutional forms but is more faithfully realized by them than by the purely regulatory alternatives the original contract presupposes.

The infrastructure over which this contest is waged has by now acquired a definite shape, and one conceded even by those who profit from it. \citet{huang2026}, writing as the chief executive of the industry's dominant accelerator vendor, describes the contemporary artificial intelligence system as a five-layer stack, rising from energy through chips, infrastructure, and models to the applications in which economic value is realized. We invoke this partition not as an independent analysis of the political economy of computing but as the incumbent's own self-description of the structure it dominates. Each of these layers is today overwhelmingly the property of a small number of firms: energy and data centres concentrated in the hands of hyperscalers, chips and their software ecosystem controlled by a single dominant vendor, foundation models trained and gated by a handful of laboratories, and applications mediated by the platforms that own the layers beneath them. The digital social contract is, read in this light, a claim about how the five-layer stack ought to be governed, asking that the persons whose data animates the models and applications be treated as ends rather than as raw material and that the value generated at the upper layers flow back to those who make it possible. What the contract does not supply is an account of ownership and control over the stack itself, presupposing instead that a State external to the stack will regulate its private owners. The architecture we develop supplies the missing account. It is a governance overlay deployed over the same five-layer technical substrate, placing each layer under commons-based, horizontal, polycentric control, so that the normative demands of the digital social contract are met not by regulating the owners of the stack from without but by changing the terms on which the stack is governed from within. The overlay does not require expropriating the incumbents who hold the layers today. It specifies instead a structure into which capital, compute, and data may flow from any source, including the hyperscalers, the dominant accelerator vendor, the laboratories, and the States that subsidize them, while the right to set the system's purposes, norms, and limits remains with its participants on equal terms. The stack so governed is in this sense a shared civic battery rather than a private asset: it can be charged by whoever has the means, while who steers it is held apart from who charges it, to the degree that no single source becomes indispensable. The decoupling of contribution from control, developed formally in Section~\ref{sec:integration}, is the device that makes this combination of open capitalization and horizontal governance coherent rather than contradictory, and we are careful there to mark the conditions under which it holds and those under which it fails. The six layers of the Dignity Stack are therefore not a re-partition of the five technical layers. They track the six normative dimensions of the digital social contract rather than the engineering decomposition of the machine, so that a single governance concern such as economic justice ranges across several technical layers, while a single technical layer such as infrastructure is implicated in several governance concerns.

The argument is structured as follows. Section~\ref{sec:foundations} establishes the philosophical foundations, examining the convergence and divergence between data personalism and the theory of commons-based, polycentric governance on the questions of dignity, autonomy, and subsidiarity, with transparent acknowledgment of which commitments of the cited traditions we draw upon and which we set aside. Section~\ref{sec:architecture} introduces the six-layer Dignity Stack architecture, providing formal definitions for each layer. Sections~\ref{sec:dignity_layer} through~\ref{sec:economic_layer} develop each layer in detail, specifying the voluntary organizational protocol that implements the corresponding dimension of the digital social contract. Section~\ref{sec:integration} presents the integrated architecture and analyzes the inter-layer dependencies that ensure systemic coherence. Section~\ref{sec:objections} addresses eight principal objections, including capture of the commons by its largest funders, data irreversibility in trained models, the tyranny of structurelessness, the tragedy of the anticommons, and the concentration of capital and compute in the lower layers of the AI stack. Section~\ref{sec:conclusion} concludes. Figure~\ref{fig:dignity_stack} provides a visual overview.

\begin{figure}[t]
\centering
\begin{tikzpicture}[
  layer/.style={
    minimum width=9.2cm,
    text width=8.7cm,
    minimum height=1.1cm,
    text=white,
    font=\small,
    rounded corners=1pt,
    inner sep=4pt
  },
  dsc/.style={
    minimum width=3.2cm,
    minimum height=1.1cm,
    font=\scriptsize,
    rounded corners=1pt,
    inner sep=3pt,
    draw=gray200,
    fill=gray50,
    text=gray600
  }
]
\node[layer, fill=red600] (d1) at (0, 0)
  {\textbf{D1 Dignity Infrastructure}\hfill
   \textrm{\scriptsize Illich + Bookchin}};
\node[layer, fill=blue600] (d2) at (0, 1.45)
  {\textbf{D2 Sovereignty}\hfill
   \textrm{\scriptsize Ostrom + Tucker}};
\node[layer, fill=coral600] (d3) at (0, 2.9)
  {\textbf{D3 Contextual Integrity}\hfill
   \textrm{\scriptsize Kropotkin}};
\node[layer, fill=teal600] (d4) at (0, 4.35)
  {\textbf{D4 Fiduciary}\hfill
   \textrm{\scriptsize Malatesta}};
\node[layer, fill=purple600] (d5) at (0, 5.8)
  {\textbf{D5 Participatory Governance}\hfill
   \textrm{\scriptsize Bakunin + Bookchin}};
\node[layer, fill=gold600] (d6) at (0, 7.25)
  {\textbf{D6 Economic Justice}\hfill
   \textrm{\scriptsize Proudhon}};

\node[dsc, anchor=west] at ([xshift=0.4cm]d1.east) {Tech.\ Oversight};
\node[dsc, anchor=west] at ([xshift=0.4cm]d2.east) {Data Sovereignty};
\node[dsc, anchor=west] at ([xshift=0.4cm]d3.east) {Social Cohesion};
\node[dsc, anchor=west] at ([xshift=0.4cm]d4.east) {Automation Limits};
\node[dsc, anchor=west] at ([xshift=0.4cm]d5.east) {Political Legitimacy};
\node[dsc, anchor=west] at ([xshift=0.4cm]d6.east) {Economic Justice};

\draw[amber600, line width=2pt, rounded corners=3pt]
  ([xshift=-1.2cm, yshift=-0.45cm]d1.south west)
  rectangle
  ([xshift=0.35cm, yshift=0.45cm]d6.north east);
\node[rotate=90, font=\scriptsize\bfseries, text=amber600, anchor=center]
  at ([xshift=-0.62cm]d3.west) {Commons Governance};

\end{tikzpicture}
\caption{The Dignity Stack: six-layer governance architecture implementing the digital social contract through commons-based organizational protocols. Each layer (left) maps to a dimension of the original DSC framework (right). The amber frame represents the overarching constitutional principle of horizontal, commons-based governance that binds all layers.}
\label{fig:dignity_stack}
\end{figure}

\section{Philosophical Foundations: Data Personalism and the Theory of the Commons}
\label{sec:foundations}

This section examines the three philosophical pillars on which the digital social contract rests and argues that each generates institutional implications that point beyond the purely statist framework the contract presupposes, toward commons-based, polycentric, and subsidiary alternatives. The argument is not that these implications are uniquely derivable from the cited philosophical traditions, since each tradition admits multiple legitimate readings, but that the commons-based reading we develop is at least as faithful to their core commitments as the purely regulatory reading the contract assumes, and in several respects more so.

A preliminary clarification is essential. We do not claim that Kant, Mounier, or Nissenbaum were anarchists, or that their philosophical frameworks entail anarchism as a matter of strict logical necessity. Kant, in the \emph{Metaphysics of Morals}~\citep{kant1797}, explicitly defends the State's right to coerce as a condition for the external realization of freedom, arguing that ``hindering a hindrance to freedom is consistent with freedom according to universal laws'' (6:231). Mounier, while deeply critical of the centralized bureaucratic State, explicitly rejected anarchism, holding that the common good requires institutional constraints on individual action~\citep{mounier1938}. Nissenbaum developed contextual integrity as a framework for regulatory policy, not as an argument against regulation~\citep{nissenbaum2004}. We acknowledge these positions and do not misrepresent them. What we argue is that the digital social contract selectively mobilizes these traditions, invoking their critiques of instrumentalization, commodification, and contextual boundary violation, while relying on institutional mechanisms that are themselves subject to the same critiques. Our reading is equally selective, but in a direction that resolves this tension rather than reproducing it: we adopt the horizontal, federative, and mutualist organizational forms developed most fully within the cooperative and anarchist traditions, while declining their rejection of the State as such. The commons we describe is not an anti-State project but a State-agnostic one. It neither requires the abolition of public authority nor depends upon it for its internal governance; it can be used and funded by States and firms while remaining governed horizontally by its participants. In this our proposal stands closer to Mounier's own pluralism, which sought subsidiary institutions mediating between the person and the State, than to the anarchism from which we borrow organizational technique. The label we retain is therefore commons-based and subsidiary rather than anarchist: the architecture is anarchist in its means, the horizontal and voluntary forms of self-organization, but not in its ends, the abolition of the State. This State-agnostic posture occupies a definite position rather than an unstable midpoint. Where a State or firm merely participates in and funds the commons, the commons governs only its own internal affairs and binds only its voluntary participants, which requires no authority over anyone. Where the commons constrains the data practices of a powerful participant, it does so on the strength of that participant's voluntary accession together with the minimal legal recognition of the right of associations to govern their own affairs, which is Ostrom's seventh design principle, and not on any claim to coerce the unwilling. We do not claim the commons can govern a State that refuses such recognition; against a State determined to override it the commons is, like every private owner of the same infrastructure, without recourse. The proposal is in this sense a recognized, subsidiary commons rather than either a delegated public regulator or an autonomous counter-power, and we locate it there deliberately rather than leaving it to hover between the two.

The first pillar is the Kantian prohibition on treating persons merely as means. Alvarez-Pallete et al.\ invoke Kant's second formulation of the categorical imperative to ground their distinction between DatAism, the paradigm that reduces persons to data points for algorithmic optimization, and HumAism, the counter-paradigm that places human dignity at the centre of data governance~\citep{alvarezpallete2026}. The invocation is philosophically sound: surveillance capitalism, as analyzed by Zuboff~\citep{zuboff2019}, systematically treats persons as raw material for prediction markets, violating the categorical imperative at the level of institutional design rather than individual action. The critical word in Kant's formula is ``merely'': persons must be treated ``never merely as a means, but always at the same time as an end.'' Kant himself held that the State can coerce while treating persons as ends, because coercion in the service of rightful freedom respects rational agency by maintaining the conditions under which all persons can exercise their freedom~\citep{kant1797}. A long line of critics, from William Godwin's \emph{Enquiry Concerning Political Justice}~\citep{godwin1793} through Kropotkin's mutual aid~\citep{kropotkin1902}, contests this Kantian reconciliation on empirical grounds: whatever the theoretical possibility of a State that coerces only in the service of freedom, actually existing States routinely coerce in the service of corporate interests, class preservation, and self-perpetuation. The question for data governance is not whether an ideal State could enforce dignity, but whether the States that actually exist do so, or whether their structural entanglement with the power asymmetries that generate data extraction makes them unreliable guarantors of the values they proclaim. We do not draw from this the anarchist conclusion that the State must be abolished. We draw the weaker and better-supported claim that an architecture which makes dignity depend on the goodwill of any single authority, whether a public regulator or a private owner of the infrastructure, is more fragile than one in which governance is distributed, consent-based, and structurally hard to capture. The commons is the constructive response to the empirical gap between Kant's theory of the rightful State and the reality of both State and corporate complicity in data extraction. We grant at once that distributing governance does not by itself defeat the resource asymmetry that drives capture, and that the commons is, as Section~\ref{sec:integration} concedes, vulnerable to the same asymmetry rather than immune to it. The inference is not that distribution abolishes capture but the narrower one that polycentric, consent-based governance localizes the consequences of any single capture and renders capture detectable and contestable, where a unitary authority both concentrates the stakes of capture and, being the arbiter of its own conduct, obscures it. This lowers the cost of the empirical fact that authorities are captured without pretending to eliminate the fact.

\begin{definition}[Non-instrumentalization thesis]
\label{ax:noninst}
A governance architecture for data more faithfully respects the Kantian prohibition on instrumentalization to the extent that it minimizes the number of institutional relationships in which persons are subject to the will of others without their ongoing, meaningful, and revocable consent. An architecture constituted entirely through voluntary association satisfies this condition more fully than one that includes institutions whose authority does not depend on the consent of those subject to it.
\end{definition}

The second pillar is Mounier's personalism, which the digital social contract mobilizes to argue that data is not a detachable commodity but an expression of irreducible personal identity~\citep{mounier1938}. Mounier's contribution is to insist that the person is constituted not in isolation but in relation: the person exists as a being-in-community, and dignity is realized through free engagement with others, not through atomized autonomy. This relational conception of the person is the foundation of data personalism, the thesis that data governance must respect both individual autonomy and the relational context in which data is generated and interpreted. Mounier distinguished the person from the individual, where the individual is the liberal-bourgeois abstraction of the isolated rights-bearer and the person is the concrete being embedded in community, and he argued that both the capitalist market and the centralized bureaucratic State reduce persons to individuals by dissolving the organic communities through which personhood is realized~\citep{mounier1938}. Mounier did not, however, conclude from this critique that the State should be abolished; he sought a reformed, decentralized, pluralist political order in which subsidiary institutions mediate between the person and the State, and in which the State retains a positive role, protecting and fostering those intermediary bodies rather than dissolving or supplanting them. We do not follow the anarchist communitarians who take the critique further, holding that voluntary communities should replace the State outright. On the governance form our reading is genuinely closer to Mounier's pluralism than to that anarchism: the commons is the subsidiary institution, governing the stewardship of personal data at the level closest to those whose data it is, neither absorbed into the State nor pitted against it, and compatible with a State that recognizes and protects it in the manner Mounier envisaged. We are careful, however, not to claim Mounier's endorsement beyond what the texts support, and two qualifications are required. First, Mounier's subsidiarity assigns the State a protective function over intermediary bodies; our architecture accommodates this through the recognition of the right to organize rather than through opposition to the State, but it does not depend on such protection, and to that extent it asks of subsidiarity slightly less vertical structure than Mounier did. Second, and more substantially, Mounier accepted the market as a legitimate means of meeting material needs, criticizing its dominance rather than its existence, whereas the economic layer we develop in Section~\ref{sec:economic_layer} adopts a Proudhonian mutualism that withdraws the residual claim of capital. This is our extension, not Mounier's position, and we mark it as such: the Dignity Stack draws on Mounier's subsidiary mediation of personhood while going beyond his economics, and our claim of fidelity is to the former and not the latter.

\begin{definition}[Relational personhood thesis]
\label{ax:relational}
The human person is constituted in and through relations with others. A data governance architecture more faithfully realizes Mounerian personalism to the extent that it preserves and strengthens the voluntary relational contexts through which personhood is realized, rather than substituting bureaucratic or algorithmic intermediation for direct human relations of mutual recognition.
\end{definition}

The third pillar is contextual integrity, the principle, developed by Nissenbaum~\citep{nissenbaum2004} and mobilized in the digital social contract via Cohen and Solove, that the appropriateness of data flows depends on the social context in which data is generated and the norms governing that context. Health data collected for treatment is governed by different norms than commercial data collected for advertising; violations of contextual integrity occur when data flows cross contextual boundaries without respecting the norms of either context. The digital social contract proposes to enforce contextual integrity through regulatory mechanisms: purpose limitation requirements, mandatory impact assessments, and legal remedies for boundary violations. The difficulty is that the State, as a unitary sovereign, is itself a persistent violator of contextual integrity: intelligence agencies routinely access data collected in commercial contexts for surveillance purposes, tax authorities access financial data for enforcement purposes unrelated to the context of its collection, and law enforcement agencies increasingly demand access to health, communications, and location data that persons disclosed in contexts governed by norms of confidentiality. An enforcement mechanism that is also a principal source of the violations it is meant to prevent faces a structural credibility problem. Nissenbaum herself does not draw this conclusion; her framework is designed for regulatory application. We extend her framework by observing that a governance architecture in which no single authority has jurisdiction across all contexts embodies contextual integrity architecturally, making boundary violations structurally difficult rather than merely illegal. This is not a derivation from Nissenbaum but a constructive proposal that takes her normative framework and asks what institutional form would most fully realize it.

\begin{definition}[Structural contextual integrity thesis]
\label{ax:context}
A data governance architecture embodies contextual integrity more robustly to the extent that its jurisdictional structure is context-specific rather than territorially unitary. An architecture composed of context-specific governance units with no single authority spanning all contexts makes contextual boundary violations structurally difficult; an architecture with a sovereign authority that can override contextual norms by fiat makes them structurally easy, relying on regulatory constraint rather than architectural design to prevent them.
\end{definition}

These three theses establish the normative foundations from which the Dignity Stack is constructed. They are not axioms in the sense of self-evident truths, nor are they strict derivations from Kant, Mounier, or Nissenbaum. They are substantive philosophical positions, grounded in but extending those traditions, that we defend on the following basis: (i) the non-instrumentalization thesis captures the empirical reality that actually existing States are unreliable guarantors of the dignity they theoretically protect; (ii) the relational personhood thesis extends Mounier's critique of the centralized State to its logical institutional conclusion; and (iii) the structural contextual integrity thesis takes Nissenbaum's normative framework and asks what architecture would realize it most fully. Together, they motivate a governance architecture that is voluntary in its constitution, relational in its organizational form, and pluralist in its jurisdictional structure.

\begin{proposition}[Voluntary governance as more faithful realization]
\label{prop:entailment}
A governance architecture in which authority over each context rests on the ongoing consent of its participants, with polycentric jurisdiction and no single authority spanning all contexts, satisfies the non-instrumentalization thesis, the relational personhood thesis, and the structural contextual integrity thesis more fully than any architecture that vests governance in an authority whose standing does not depend on that consent, whether a coercive State regulator or a controlling owner of capital, provided the commons-based architecture is operationally viable.
\end{proposition}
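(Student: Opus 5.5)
The plan is to treat Proposition~\ref{prop:entailment} as a comparative claim that reduces to three separate comparisons, one for each of Definitions~\ref{ax:noninst}, \ref{ax:relational} and~\ref{ax:context}. Each definition is already a monotone ordering: \emph{more faithful to the extent that} some structural quantity is smaller or larger. So I only need to show that the commons architecture, call it $C$, dominates any non-consensual architecture $A$ in each quantity. The proviso of operational viability does real work. It guarantees that $C$ actually performs the governance functions that $A$ performs. Without it, the comparison would be vacuous: an architecture that governs nothing trivially has zero non-consensual relationships.

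\textbf{Step 1 (non-instrumentalization).} Fix a set $R$ of governance relationships that both $C$ and $A$ must instantiate. Viability ensures that $C$ covers all of $R$.
\begin{itemize}
\item In $C$, authority over each context rests on the ongoing, revocable consent of its participants. The number of relationships in $R$ in which a person is subject to another's will without consent is therefore zero.
\item In $A$, by hypothesis, some authority's standing does not depend on consent. This is either a coercive regulator or a controlling owner of capital. At least one relationship in $R$ is then non-consensual.
\end{itemize}
Definition~\ref{ax:noninst} then ranks $C$ strictly above $A$. Its second sentence states this case almost verbatim.

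\textbf{Step 2 (structural contextual integrity).} The argument turns on the existence of a spanning authority.
\begin{itemize}
\item $C$ has polycentric jurisdiction with no unit spanning all contexts. It is therefore context-specific in the sense of Definition~\ref{ax:context}.
\item For $A$, I argue that an authority whose standing is independent of participants' consent is not bounded by the consent of any particular context. Its jurisdiction is therefore not confined by contextual norms and can override them by fiat. This is the spanning sovereign of Definition~\ref{ax:context}.
\end{itemize}
This inference is the weakest link. An $A$ could be non-consensual yet still narrowly scoped, for example a regulator confined to health data. I would handle this in two moves. First, I would read ``vests governance'' as vesting governance of the architecture as a whole, so the authority spans whatever $A$ governs. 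Second, failing that, I would argue weak dominance: $C$ is at least as context-specific as any $A$, and strictly more so whenever $A$'s authority spans more than one context.

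\textbf{Step 3 (relational personhood).} Definition~\ref{ax:relational} rewards preserving voluntary relational contexts over substituting bureaucratic or algorithmic intermediation.
\begin{itemize}
\item In $C$, every governance unit is itself a voluntary association of those in the context. Governance is therefore constituted by the relevant relational contexts rather than imposed on them.
\item In $A$, governance is exercised by an authority external to those relations, either bureaucratic (the State) or proprietary (the owner). Its decisions intermediate between persons without their mutual recognition.
\end{itemize}
Using viability again, $C$ preserves every relational context that $A$ governs and substitutes intermediation in none, so it ranks at least as high. It ranks strictly higher wherever $A$ actually exercises its non-consensual authority.

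The main obstacle is keeping Steps 2 and 3 from collapsing into definitional restatement while still getting strict inequality. I would combine the three partial orders by showing $C$ weakly dominates $A$ on all three and strictly dominates on Definition~\ref{ax:noninst}. That gives the ``more fully'' conclusion in the Pareto sense without needing any weighting among the theses.
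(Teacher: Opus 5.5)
Your proposal is correct and follows essentially the paper's own route: a thesis-by-thesis check that the consent-based, polycentric architecture satisfies Definitions~\ref{ax:noninst}, \ref{ax:relational} and~\ref{ax:context} by construction, with operational viability as the proviso that keeps the comparison from being vacuous. Your extra care goes beyond the paper, namely having viability fix a common set of governance functions, flagging the narrowly scoped non-consensual authority in Step~2, and aggregating by Pareto dominance; the paper simply asserts that each thesis is satisfied, and in the following paragraph concedes that the comparison is conditional on the chosen operationalizations.
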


The argument is straightforward. A voluntary architecture eliminates the institutional relationships in which persons are subject to authority without ongoing consent, satisfying the non-instrumentalization thesis by construction. It preserves voluntary relational contexts rather than displacing them with bureaucratic intermediation, satisfying the relational personhood thesis. And its polycentric jurisdictional structure makes contextual boundary violations structurally difficult, satisfying the structural contextual integrity thesis. The critical proviso is operational viability: a voluntary architecture that is theoretically preferable but practically impossible would not constitute a more faithful realization. The remainder of this paper is devoted to demonstrating that the Dignity Stack is operationally viable, drawing on actually existing institutional models at each layer. The question is therefore not whether commons-based governance is normatively preferable in the abstract, but whether it can be constructed concretely for the domain of AI data governance.

One limitation of this argument must be conceded at the outset, since it qualifies the force of every claim of more faithful realization that follows. The three theses operationalize the values of the digital social contract, non-instrumentalization, relational personhood, and contextual integrity, in terms that a voluntary, polycentric architecture satisfies comparatively well: the minimization of non-consensual institutional relationships, the preservation of voluntary relational contexts, and the dispersal of jurisdiction. A defender of statist regulation may reasonably contest these operationalizations, holding for instance that the Kantian prohibition on instrumentalization is honored not by minimizing coercive relationships but by subjecting them to public, rule-of-law constraint, which is in fact closer to Kant's own position at 6:231. We do not claim to have refuted that reading. We claim that our operationalizations are defensible articulations of the same values, at least as faithful to the cited traditions as the statist reading, and that on these articulations the commons architecture scores higher. The comparative is in this sense conditional on the operationalization, and we advance it as a defended position rather than a demonstration. To a reader who rejects the operationalization, the contribution of the paper is the weaker but still substantive one of exhibiting a concrete, commons-based architecture that satisfies the dignity-centred values under a natural reading of them.

\section{The Dignity Stack: Architecture and Formal Specification}
\label{sec:architecture}

The Dignity Stack is a six-layer governance architecture that maps the six dimensions of the digital social contract onto six layers of commons-governed AI infrastructure, assigning to each layer a commons-based organizational protocol that satisfies the normative requirements of the corresponding dimension. The community AI infrastructure so governed is the commons-owned counterpart of the centrally held five-layer technical stack described by \citet{huang2026}; the Dignity Stack is a governance overlay over that substrate rather than a technical stack in its own right, and it determines not how the machine is built but who owns and controls it. Before developing each layer, we establish the formal structure.

\begin{definition}[Dignity Stack]
\label{def:stack}
A \emph{Dignity Stack} is a tuple $\mathcal{D} = (D_1, \ldots, D_6, \Phi)$ where each $D_i$ is a governance layer consisting of a triple $(N_i, O_i, V_i)$: a finite set of normative requirements $N_i = \{n_{i1}, \ldots, n_{ik_i}\}$ derived from the corresponding dimension of the digital social contract, an organizational protocol $O_i$ drawn from commons-based institutional models, and a verification predicate $V_i$ defined as the conjunction $V_i(O_i, N_i) = \bigwedge_{j=1}^{k_i} v_{ij}(O_i, n_{ij})$, where each component $v_{ij}$ returns $1$ if and only if $O_i$ provides an institutional mechanism that meets the requirement $n_{ij}$. The relation $\Phi: \{D_1, \ldots, D_6\}^2 \to \{0,1\}$ records the inter-layer dependencies, with $\Phi(D_i, D_j) = 1$ meaning that $D_j$ presupposes $D_i$ and $\Phi(D_i, D_j) = 0$ for every ordered pair not enumerated in Definition~\ref{def:dependency}.
\end{definition}

\begin{definition}[Normative compliance]
\label{def:compliance}
A Dignity Stack $\mathcal{D}$ is \emph{normatively compliant} with the digital social contract if and only if, for every layer $D_i = (N_i, O_i, V_i)$, the verification predicate $V_i(O_i, N_i) = 1$, and the inter-layer dependency function $\Phi$ is satisfied for all dependent pairs.
\end{definition}

A word on the status of this formalism is owed, since it would be a mistake to read the definitions and propositions that follow as theorems in a deductive calculus. The tuple notation regiments three things: the requirements each layer must meet, the protocol proposed to meet them, and the dependency structure that orders the layers. It does not mechanize the verification predicates. Each component $v_{ij}$ encodes a substantive judgement, that a given commons mechanism meets a given normative requirement, and that judgement is discharged not by computation but by the layer-by-layer arguments of Sections~\ref{sec:dignity_layer} through~\ref{sec:economic_layer}. Those arguments, labelled propositions for reference, are arguments in the ordinary philosophical sense that a protocol satisfies a requirement, and they carry exactly the weight such arguments carry, no more. A Dignity Stack that is normatively compliant in the sense of Definition~\ref{def:compliance} is what we mean, throughout, by a stack that \emph{implements} the digital social contract; the phrase has no content beyond the conjunction of the layer verifications and the satisfaction of the dependency relation. The formalism earns its keep by making the requirements, the dependencies, and the points of possible failure explicit and checkable; it does not, and is not meant to, convert contestable normative judgements into derivations. Table~\ref{tab:mapping} summarizes the mapping between the six dimensions of the digital social contract and the six layers of the Dignity Stack, together with the commons-based organizational protocol assigned to each layer.

\begin{table}[t]
\centering
\footnotesize
\setlength{\tabcolsep}{4pt}
\renewcommand{\arraystretch}{1.25}
\rowcolors{2}{gray50}{white}
\begin{tabularx}{\linewidth}{>{\bfseries}lllX}
\toprule
\rowcolor{purple600}
\textcolor{white}{Layer} &
\textcolor{white}{DSC Dimension} &
\textcolor{white}{Commons Protocol} &
\textcolor{white}{Core mechanism} \\
\midrule
D1 & Technological Oversight & Illich + Bookchin & Convivial thresholds; municipal assembly \\
D2 & Data Sovereignty & Ostrom + Tucker & Cooperative data trusts; open protocols \\
D3 & Social Cohesion & Kropotkin & Federated mutual-aid data sharing \\
D4 & Automation Limits & Malatesta & Voluntary fiduciary commitment; exit rights \\
D5 & Political Legitimacy & Bakunin + Bookchin & Nested assemblies; bottom-up federation \\
D6 & Economic Justice & Proudhon & Mutualist exchange; mutual credit \\
\bottomrule
\end{tabularx}
\caption{Mapping the six dimensions of the digital social contract (Alvarez-Pallete et al., 2026) onto the six layers of the Dignity Stack, with the commons-based organizational protocol assigned to each layer.}
\label{tab:mapping}
\end{table}

The ordering of the layers is not arbitrary. It follows the same bottom-up logic as the Liberation Stack framework: the lower layers provide the material and institutional preconditions for the higher layers, so that D1 (dignity infrastructure) is foundational and D6 (economic justice) presupposes the functioning of all layers beneath it. The inter-layer dependencies specified by $\Phi$ formalize this ordering.

Figure~\ref{fig:taxonomy} presents a comprehensive conceptual taxonomy of the Dignity Stack, tracing the flow from philosophical traditions through normative theses and institutional layers to the concrete enforcement mechanisms and their acknowledged limitations. The taxonomy makes explicit three features of the architecture that the layer-by-layer exposition in Sections~\ref{sec:dignity_layer}--\ref{sec:economic_layer} develops in detail: (i) the philosophical traditions are mobilized selectively, with transparent acknowledgment of where we extend them beyond their authors' own conclusions; (ii) each layer is governed by a specific enforcement mechanism that operates without coercion; and (iii) each mechanism carries a recognized limitation that constrains its effectiveness under specific conditions. The taxonomy thus serves as a map of both the architecture's strengths and its honest boundaries.

\begin{figure}[p]
\centering
\begin{tikzpicture}[
  every node/.style={font=\scriptsize, align=center},
  trad/.style={
    minimum width=2.6cm, minimum height=0.75cm,
    rounded corners=2pt, font=\scriptsize\bfseries,
    text=white, inner sep=3pt
  },
  thesis/.style={
    minimum width=3.0cm, minimum height=0.65cm,
    rounded corners=2pt, font=\scriptsize,
    draw=gray600, fill=gray50, text=gray600, inner sep=3pt
  },
  layer/.style={
    minimum width=2.3cm, minimum height=0.65cm,
    rounded corners=2pt, font=\scriptsize\bfseries,
    text=white, inner sep=3pt
  },
  thinker/.style={
    minimum width=2.3cm, minimum height=0.55cm,
    rounded corners=1pt, font=\tiny,
    draw=amber600, fill=white, text=amber600, inner sep=2pt
  },
  mech/.style={
    minimum width=2.3cm, minimum height=0.55cm,
    rounded corners=1pt, font=\tiny,
    draw=teal600, fill=white, text=teal600, inner sep=2pt
  },
  limit/.style={
    minimum width=2.3cm, minimum height=0.55cm,
    rounded corners=1pt, font=\tiny,
    draw=red600, fill=white, text=red600, inner sep=2pt
  },
  arr/.style={-{Stealth[length=3pt]}, gray600, line width=0.5pt},
  darr/.style={-{Stealth[length=3pt]}, gray200, line width=0.4pt, dashed}
]

\node[font=\tiny\bfseries\color{gray600}, anchor=east] at (-6.2, 0) {Philosophical};
\node[font=\tiny\bfseries\color{gray600}, anchor=east] at (-6.2, -1.8) {Normative};
\node[font=\tiny\bfseries\color{gray600}, anchor=east] at (-6.2, -3.5) {DSC};
\node[font=\tiny\bfseries\color{gray600}, anchor=east] at (-6.2, -5.2) {Stack layer};
\node[font=\tiny\bfseries\color{gray600}, anchor=east] at (-6.2, -6.7) {Protocol};
\node[font=\tiny\bfseries\color{gray600}, anchor=east] at (-6.2, -8.2) {Enforcement};
\node[font=\tiny\bfseries\color{gray600}, anchor=east] at (-6.2, -9.7) {Limitation};

\node[trad, fill=blue600] (kant) at (-3.5, 0) {Kant\\(Rechtslehre)};
\node[trad, fill=purple600] (mounier) at (0, 0) {Mounier\\(Personalism)};
\node[trad, fill=coral600] (nissen) at (3.5, 0) {Nissenbaum\\(Contextual Integrity)};

\node[thesis] (t1) at (-3.5, -1.8) {Non-instrumentalization\\thesis (Def.~2.1)};
\node[thesis] (t2) at (0, -1.8) {Relational personhood\\thesis (Def.~2.2)};
\node[thesis] (t3) at (3.5, -1.8) {Structural contextual\\integrity thesis (Def.~2.3)};

\draw[arr] (kant) -- (t1) node[midway, right, font=\tiny\color{gray600}] {selective};
\draw[arr] (mounier) -- (t2) node[midway, right, font=\tiny\color{gray600}] {extends};
\draw[arr] (nissen) -- (t3) node[midway, right, font=\tiny\color{gray600}] {extends};

\node[thesis, minimum width=1.6cm] (dsc1) at (-5.0, -3.5) {Technological\\Oversight};
\node[thesis, minimum width=1.6cm] (dsc2) at (-3.0, -3.5) {Data\\Sovereignty};
\node[thesis, minimum width=1.6cm] (dsc3) at (-1.0, -3.5) {Social\\Cohesion};
\node[thesis, minimum width=1.6cm] (dsc4) at (1.0, -3.5) {Automation\\Limits};
\node[thesis, minimum width=1.6cm] (dsc5) at (3.0, -3.5) {Political\\Legitimacy};
\node[thesis, minimum width=1.6cm] (dsc6) at (5.0, -3.5) {Economic\\Justice};

\draw[darr] (t1.south) -- ++(0,-0.3) -| (dsc1.north);
\draw[darr] (t1.south) -- ++(0,-0.3) -| (dsc4.north);
\draw[darr] (t2.south) -- ++(0,-0.2) -| (dsc2.north);
\draw[darr] (t2.south) -- ++(0,-0.2) -| (dsc5.north);
\draw[darr] (t3.south) -- ++(0,-0.3) -| (dsc3.north);
\draw[darr] (t3.south) -- ++(0,-0.3) -| (dsc6.north);

\node[layer, fill=red600, minimum width=1.6cm] (d1) at (-5.0, -5.2) {D1\\Infrastructure};
\node[layer, fill=blue600, minimum width=1.6cm] (d2) at (-3.0, -5.2) {D2\\Sovereignty};
\node[layer, fill=coral600, minimum width=1.6cm] (d3) at (-1.0, -5.2) {D3\\Contextual};
\node[layer, fill=teal600, minimum width=1.6cm] (d4) at (1.0, -5.2) {D4\\Fiduciary};
\node[layer, fill=purple600, minimum width=1.6cm] (d5) at (3.0, -5.2) {D5\\Governance};
\node[layer, fill=gold600, minimum width=1.6cm] (d6) at (5.0, -5.2) {D6\\Economic};

\foreach \i in {1,...,6} {
  \draw[arr] (dsc\i) -- (d\i);
}

\node[thinker, minimum width=1.6cm] (p1) at (-5.0, -6.7) {Illich +\\Bookchin};
\node[thinker, minimum width=1.6cm] (p2) at (-3.0, -6.7) {Ostrom +\\Tucker};
\node[thinker, minimum width=1.6cm] (p3) at (-1.0, -6.7) {Kropotkin};
\node[thinker, minimum width=1.6cm] (p4) at (1.0, -6.7) {Malatesta};
\node[thinker, minimum width=1.6cm] (p5) at (3.0, -6.7) {Bakunin +\\Bookchin};
\node[thinker, minimum width=1.6cm] (p6) at (5.0, -6.7) {Proudhon};

\foreach \i in {1,...,6} {
  \draw[arr] (d\i) -- (p\i);
}

\node[mech, minimum width=1.6cm] (e1) at (-5.0, -8.2) {Assembly\\oversight};
\node[mech, minimum width=1.6cm] (e2) at (-3.0, -8.2) {Exit +\\commons rules};
\node[mech, minimum width=1.6cm] (e3) at (-1.0, -8.2) {Federation\\revocation};
\node[mech, minimum width=1.6cm] (e4) at (1.0, -8.2) {Reputation +\\substitution};
\node[mech, minimum width=1.6cm] (e5) at (3.0, -8.2) {Mandate +\\recall};
\node[mech, minimum width=1.6cm] (e6) at (5.0, -8.2) {Mutual credit\\+ surplus};

\foreach \i in {1,...,6} {
  \draw[arr] (p\i) -- (e\i);
}

\node[limit, minimum width=1.6cm] (l1) at (-5.0, -9.7) {Convivial scale\\limits capability};
\node[limit, minimum width=1.6cm] (l2) at (-3.0, -9.7) {Network effects\\raise exit cost};
\node[limit, minimum width=1.6cm] (l3) at (-1.0, -9.7) {Anticommons\\gridlock risk};
\node[limit, minimum width=1.6cm] (l4) at (1.0, -9.7) {Capture by\\large funders};
\node[limit, minimum width=1.6cm] (l5) at (3.0, -9.7) {Structurelessness\\risk (Freeman)};
\node[limit, minimum width=1.6cm] (l6) at (5.0, -9.7) {Model weights\\irreversibility};

\foreach \i in {1,...,6} {
  \draw[darr] (e\i) -- (l\i);
}


\end{tikzpicture}
\caption{Conceptual taxonomy of the Dignity Stack. The architecture flows from three philosophical traditions (top), through three normative theses that selectively draw on them, to six DSC dimensions mapped onto six governance layers. Each layer is assigned a commons-based organizational protocol and a concrete enforcement mechanism. The bottom row acknowledges the specific limitation each mechanism faces, ensuring the taxonomy maps both strengths and honest boundaries.}
\label{fig:taxonomy}
\end{figure}

\section{D1. Dignity Infrastructure: Convivial Thresholds and Democratic Oversight}
\label{sec:dignity_layer}

The first dimension of the digital social contract is technological oversight: the requirement that AI systems be designed and deployed in accordance with dignity-by-design principles, including respect for autonomy, transparency, non-manipulation, proportionality, and human oversight~\citep{alvarezpallete2026}. The dignity-by-design checklist proposed by the contract specifies six criteria: genuinely free choice, comprehensible disclosure, data minimization, absence of dark patterns, non-discrimination, and respect for irreducible personhood. The enforcement mechanism proposed is regulatory: designers must document their compliance, and external auditors verify it.

The commons-based alternative begins with a structural observation: a system whose energy and computational scale exceeds the governance capacity of the community it serves cannot satisfy dignity-by-design requirements regardless of how many checklists it passes, because the institutional precondition of meaningful oversight, namely that the community can understand, modify, and if necessary dismantle the system, is absent. This is Illich's convivial threshold applied to AI governance~\citep{illich1973,illich1974}: beyond a certain scale, the tool dominates the user rather than serving the user, and no amount of procedural compliance can substitute for the structural condition of community control.

\begin{definition}[Convivial AI threshold]
\label{def:convivial}
An AI system $A$ operating within a community $C$ satisfies the \emph{convivial threshold} if and only if:
\begin{align}
&\text{(i)}\quad E(A) \leq G(C), \label{eq:energy} \\
&\text{(ii)}\quad K(A) \leq U(C), \label{eq:knowledge} \\
&\text{(iii)}\quad T_{\text{dismantle}}(A) \leq T_{\text{assembly}}(C), \label{eq:dismantle}
\end{align}
where $E(A)$ is the energy consumption of $A$, $G(C)$ is the energy that $C$ can self-provision; $K(A)$ is the technical knowledge required to understand, audit, and modify $A$, and $U(C)$ is the collective technical capacity of $C$; and $T_{\text{dismantle}}(A)$ is the time required to fully decommission $A$, and $T_{\text{assembly}}(C)$ is the time required for $C$ to convene a deliberative assembly and reach a collective decision.
\end{definition}

The three relations are qualitative thresholds rather than numerical optima, and each is read on a commensurable scale. The energy relation~\eqref{eq:energy} compares power in common units. The knowledge relation~\eqref{eq:knowledge} is to be read as set inclusion, the competencies required to understand, audit, and modify $A$ being contained in the competencies the community collectively possesses, rather than as a comparison of scalar quantities. The time relation~\eqref{eq:dismantle} compares durations. A system satisfies the threshold when all three hold and fails it when any one is violated; the predicate is therefore decidable in principle given an inventory of the community's energy, competencies, and deliberative cadence, even where exact measurement is impractical. Condition~\eqref{eq:energy} ensures material sovereignty: the community is not dependent on external energy sources whose providers could exercise leverage over the AI system's operation. Condition~\eqref{eq:knowledge} ensures epistemic sovereignty: the community can audit the system without relying on external experts whose interests may diverge from its own. Condition~\eqref{eq:dismantle} ensures political sovereignty: the community can shut down the system faster than it can deliberate about whether to do so, ensuring that the system never outpaces the community's governance capacity. Together, these three conditions constitute the structural precondition for meaningful technological oversight: a community that can power, understand, and dismantle its AI system has genuine control over it, whereas a community that depends on external energy, external expertise, or external infrastructure for any of these functions has nominal oversight but not real sovereignty.

The organizational protocol for D1 is drawn from the combination of Illich and Bookchin that governs L1 of the Liberation Stack. The community AI system is powered by community-owned renewable energy, governed by a municipal or cooperative assembly, and sized to satisfy the convivial threshold. The dignity-by-design checklist of the digital social contract is not rejected but reconceived: instead of a compliance exercise audited by external regulators, it becomes a set of design constraints enforced by the assembly itself, which has the knowledge and the power to verify compliance directly because the system operates within its epistemic and material sovereignty.

\begin{proposition}[D1 satisfies technological oversight]
\label{prop:d1}
A community AI system that satisfies the convivial threshold (Definition~\ref{def:convivial}) and is governed by a deliberative assembly with binding authority over its design, deployment, and decommissioning satisfies the technological oversight requirements of the digital social contract, including all six dignity-by-design criteria, without recourse to external regulatory enforcement.
\end{proposition}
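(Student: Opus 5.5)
The argument is a verification in the sense of Definition~\ref{def:stack}: take $N_1$ to be the six dignity-by-design criteria of the digital social contract, and $O_1$ to be the assembly-governed, convivially sized system. It then suffices to discharge each component $v_{1j}(O_1,n_{1j})$ separately. For each criterion I will exhibit an institutional mechanism internal to the assembly that meets it. The strategy throughout is to replace the regulator's two functions, \emph{knowing} whether a criterion is met and \emph{compelling} correction, with internal counterparts. Condition~\eqref{eq:knowledge} supplies the knowing: the required competencies are contained in those the community possesses, so verification needs no external auditor. Binding assembly authority, backed by conditions~\eqref{eq:energy} and~\eqref{eq:dismantle}, supplies the compelling: the assembly can withhold power or decommission faster than it deliberates, so its design constraints are enforceable against the system itself.

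I will then run through the six criteria in order.
\begin{itemize}
\item \emph{Genuinely free choice.} It follows from the assembly's authority over deployment, together with the exit and consent structure inherited from Definition~\ref{ax:noninst}. Participation in the system is itself a revocable associational relation.
\item \emph{Comprehensible disclosure.} It follows from~\eqref{eq:knowledge}: what the community can audit it can explain to its members.
\item \emph{Data minimization.} The energy bound~\eqref{eq:energy} caps scale, and the assembly sets collection as a design constraint.
\item \emph{Absence of dark patterns.} The operators' incentive to manipulate is removed, because those who design the interface are accountable to, and partly identical with, those who use it. Detection again rests on~\eqref{eq:knowledge}.
\item \emph{Non-discrimination.} Those affected hold binding authority over design, and can audit and modify the system.
\item \emph{Respect for irreducible personhood.} Condition~\eqref{eq:dismantle} guarantees that the system never outpaces human deliberation, so no decision escapes the community's oversight. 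I will link this to Definition~\ref{ax:relational}.
\end{itemize}
Conjoining these six verifications gives $V_1=1$. The clause ``without recourse to external regulatory enforcement'' holds because none of the mechanisms invoked lies outside the community.

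The main obstacle is that satisfying the convivial threshold is a \emph{capacity} condition, not a \emph{performance} condition. A community able to detect discrimination or dark patterns may still fail to do so, or its majority may itself discriminate against a minority member. My first move here is to read ``satisfies the requirements'' as ``provides an institutional mechanism meeting each requirement,'' exactly as the $v_{ij}$ are defined in Definition~\ref{def:stack}. That is weaker than guaranteeing outcomes, and it is the same standard the regulatory alternative meets. For the majority-tyranny case in particular, I will appeal to the voluntary constitution of the assembly, where exit and revocable consent protect dissenters. I will also note explicitly that the proposition is an argument in the ordinary philosophical sense flagged after Definition~\ref{def:compliance}, not a derivation, and that its force is conditional on the operationalization conceded after Proposition~\ref{prop:entailment}.
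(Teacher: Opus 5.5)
Your proposal is correct and follows the paper's proof: it discharges the six criteria one by one. The knowledge relation gives disclosure, and the assembly's binding design authority gives minimization and non-discrimination. Placing design authority with the affected users removes, without claiming to abolish, the incentive for dark patterns, and the dismantle relation rules out lock-in and dependence.

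One small correction. Exit is not among this proposition's hypotheses: Definition~\ref{ax:noninst} is an evaluative thesis, and exit enters only with the D2 trust of Definition~\ref{def:trust}. Likewise, the energy bound has no bearing on data minimization. So ground free choice, as the paper does, in every member's standing to propose modification or decommissioning together with condition~\eqref{eq:dismantle}. Treat the majority-tyranny worry under your capacity-not-performance reading rather than by appeal to exit.
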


\begin{proof}
Consider each criterion in turn. Genuinely free choice: the assembly governs by consensus or qualified majority, and any member may propose modification or decommissioning; the convivial threshold ensures that the system can be dismantled faster than the assembly can deliberate, so the community is never locked in. Comprehensible disclosure: condition~\eqref{eq:knowledge} ensures that the community possesses the collective expertise to understand the system's data flows. Data minimization: the assembly defines the purposes for which data is collected and can enforce minimization directly through its binding authority over system design. Absence of dark patterns: because the assembly governs the system's design and its members are the users who would be targeted, manipulative interface patterns are subject to direct audit and challenge by the affected parties, and the assembly can mandate their removal. We do not claim that direct democracy renders dark patterns impossible, since an assembly may authorize, or simply fail to detect, the manipulation of its own members; we claim that design authority lies with the affected parties rather than with a vendor whose revenue is served by manipulation, which removes the structural incentive that produces dark patterns at scale. Non-discrimination: the assembly includes all community members and governs the system's design criteria, ensuring that discriminatory features are subject to direct challenge by those affected. Respect for irreducible personhood: the convivial threshold ensures that no person is rendered dependent on a system that exceeds their community's governance capacity, preserving the structural conditions for autonomous self-determination.
\end{proof}

\section{D2. Sovereignty: Cooperative Data Trusts as Ostromian Commons}
\label{sec:sovereignty_layer}

The second dimension of the digital social contract is data sovereignty: the right of individuals to determine how their data is used, combined with collective sovereignty over data ecosystems, the capacity to withdraw consent, and meaningful choice that transcends the fiction of informed consent through adhesion contracts~\citep{alvarezpallete2026}. The contract proposes data trusts, cooperative governance structures, and interoperability standards as mechanisms for realizing sovereignty. The institutional form it envisions, however, is a licensed intermediary operating within existing legal systems: a data trust chartered under Anglo-American trust law, governed by boards that include data subjects alongside experts and civil society representatives, and regulated by data protection authorities with enforcement powers.

The commons-based alternative retains the data trust as an organizational form but reconceives its institutional character. Instead of a licensed intermediary embedded in State legal systems, the cooperative data trust is a voluntary association constituted by its members, governed as an Ostromian commons, and accountable to its members through exit rights rather than regulatory oversight. The shift is not merely terminological: it transforms the data trust from an institution that governs on behalf of data subjects to one that is governed by them.

\begin{definition}[Cooperative data trust]
\label{def:trust}
An \emph{cooperative data trust} is a voluntary association $\mathcal{T} = (M, R, P, E)$ where $M$ is the set of members (data subjects who have voluntarily joined), $R$ is the set of governance rules adopted by collective decision of $M$, $P$ is the set of data stewardship protocols that govern how members' data is stored, processed, and shared, and $E: M \to \{0,1\}$ is the exit function that permits any member $m \in M$ to withdraw from the trust at any time, taking with them a complete, machine-readable copy of all data they contributed, with the trust retaining no copy after exit.
\end{definition}

The exit function $E$ is the fundamental institutional innovation that distinguishes the cooperative data trust from its statist counterpart. In a State-chartered data trust, the trustee's fiduciary duties are enforced by courts; a member who believes the trust has violated its duties must seek legal remedy, a process that is slow, expensive, and structurally favours the trust's legal resources over the individual's. In the cooperative data trust, enforcement is achieved through exit: a member who judges the trust to have violated its commitments simply leaves, and the trust loses both the member's data and the member's participation. The exit function is not merely a right of withdrawal but a structural mechanism that disciplines the trust's governance: a trust that systematically violates its commitments will haemorrhage members, losing both its data assets and its collective bargaining power, and will ultimately dissolve. This is the commons-based counterpart to what Hirschman~\citep{hirschman1970} terms the ``exit'' mechanism: the capacity to leave an institution without permission disciplines the institution more effectively than the capacity to voice dissent within it, precisely because exit imposes immediate material consequences that voice does not.

The governance rules $R$ are determined by Ostrom's eight design principles for viable commons~\citep{ostrom1990}, adapted to the data trust context. Clearly defined boundaries specify who is a member and what data falls within the trust's stewardship. Congruence between rules and local conditions ensures that governance rules are adapted to the specific community's data practices and values. Collective-choice arrangements guarantee that members participate in modifying governance rules. Monitoring ensures that data stewardship protocols are transparently observed. Graduated sanctions provide responses to violations of governance rules that are proportional and escalating. Conflict resolution mechanisms enable disputes to be resolved within the trust without recourse to external authorities. Minimal recognition by external authorities means the trust does not depend on State licensing for its legitimacy. Nested enterprises permit the trust to federate with other trusts for collective bargaining while retaining internal governance autonomy.

\begin{proposition}[D2 satisfies data sovereignty]
\label{prop:d2}
An cooperative data trust (Definition~\ref{def:trust}) whose governance rules $R$ satisfy Ostrom's eight design principles realizes the data sovereignty requirements of the digital social contract, specifically meaningful control, withdrawal capacity, and collective sovereignty, through voluntary institutional mechanisms rather than regulatory enforcement.
\end{proposition}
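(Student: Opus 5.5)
The argument will follow the pattern of Proposition~\ref{prop:d1}, treating the verification predicate $V_2$ of Definition~\ref{def:stack} as a conjunction of three components, one each for meaningful control, withdrawal capacity, and collective sovereignty. For each component I will exhibit the element of the tuple $\mathcal{T} = (M, R, P, E)$ of Definition~\ref{def:trust}, or the Ostrom principle in $R$, that supplies the required mechanism. I will then check that no component depends on State licensing or regulatory enforcement. The proof is therefore a case-by-case discharge of the $v_{2j}$, not a derivation.

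\textbf{Withdrawal capacity.} This is the most direct case, so I take it first. The exit function $E$ permits any $m \in M$ to leave at any time with a complete, machine-readable copy of their data, and the trust retains no copy. The withdrawal requirement is therefore met by construction, and its enforcement is self-executing, since exiting needs no permission. The Hirschman argument shows that exit also disciplines governance: a trust that breaches its commitments loses data and bargaining power.

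\textbf{Meaningful control.} Here I will combine three things: the principle of collective-choice arrangements, which lets members modify $R$ and $P$; the monitoring principle, which makes $P$ observable; and the voluntariness of $M$. Consent to data uses is then given through participation in rule-making and is continuously revocable through $E$, rather than through an adhesion contract.

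\textbf{Collective sovereignty.} For this component I will invoke the nested-enterprises principle, under which trusts federate for bargaining while keeping internal autonomy. I will add conflict resolution and graduated sanctions as internal enforcement. Minimal external recognition, which is Ostrom's seventh principle, is the only point of contact with outside authority, and it amounts to recognition rather than regulation.

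\textbf{Main obstacle.} The hardest step is reconciling the retention clause of $E$ with data already absorbed into trained models or already shared under $P$. A clause saying ``the trust retains no copy'' does not by itself erase derived artefacts. My plan is to read the withdrawal requirement as satisfied with respect to data the trust itself stewards. I will make it a constraint on $P$ that downstream sharing be conditioned on revocability: either through licences that propagate withdrawal, or by restricting sharing to revocable forms such as federated or query-only access. The residual irreversibility of model weights I will flag as a limitation to be treated in Section~\ref{sec:objections}, not as a failure of $V_2$.

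A second, lesser concern is that network effects raise the cost of exit and so weaken its disciplining force. I will address this by noting that nested federation and open protocols keep alternatives available, so exit remains a live option rather than a nominal one.
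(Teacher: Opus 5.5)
Your proposal is correct and follows essentially the paper's route. It decomposes the result into three parts: the collective-choice principle gives meaningful control; the exit function $E$ gives withdrawal capacity by construction; and collective bargaining under $R$ plus the nested-enterprise (federation) principle gives collective sovereignty, with graduated sanctions serving as non-coercive internal enforcement. One caution: your added constraint on $P$ requiring revocable downstream sharing is not a hypothesis of the proposition, so it should not be load-bearing; the paper takes $E$ at face value, handles data shared between trusts through the revocation clause of Definition~\ref{def:federation}, and defers model-weight irreversibility and network-effect exit costs to the objections in Section~\ref{sec:objections}.
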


\begin{proof}
Meaningful control: by Ostrom's collective-choice principle, every member participates in defining the rules that govern their data, and the trust's small scale relative to a State regulatory apparatus ensures that participation is substantive rather than nominal. Withdrawal capacity: the exit function $E$ guarantees unconditional withdrawal with full data portability, a stronger guarantee than the GDPR's right to erasure, which is subject to multiple exceptions and whose enforcement depends on regulatory authorities. Collective sovereignty: the trust bargains collectively on behalf of its members, aggregating individual preferences into collective terms through democratic processes specified by $R$, and the federation principle enables trusts to coordinate without centralizing governance. The absence of State licensing means the trust's sovereignty is not contingent on State recognition, and the graduated sanctions principle provides enforcement without coercion: sanctions escalate from warning through temporary restriction of participation to expulsion, but never to imprisonment or financial penalties imposed by external authorities. At every stage, the member's exit right remains intact, ensuring that the trust's governance is disciplined by the threat of departure rather than the threat of punishment.
\end{proof}

Tucker's critique of monopoly~\citep{tucker1897} reinforces the sovereignty layer by identifying the structural conditions under which data sovereignty is illusory. Tucker argued that four State-enforced monopolies, patents, land, money, and tariffs, prevent genuine voluntary exchange by creating artificial scarcities that distort bargaining power. Applied to data governance, the relevant monopolies are intellectual property (which locks users into proprietary platforms), infrastructure monopoly (which concentrates data storage in corporate clouds), and monetary monopoly (which denominates data value in State currencies whose supply is controlled by central banks). The sovereignty layer therefore requires that the data trust operate on open protocols, store data on community-governed infrastructure (D1), and transact in mutual credit or cooperative currency (D6). Without these structural conditions, data sovereignty remains nominal: a member who can withdraw their data but has nowhere to take it except another corporate platform has not meaningfully exited.

\section{D3. Contextual Integrity: Federated Mutual-Aid Protocols}
\label{sec:contextual_layer}

The third dimension of the digital social contract addresses social cohesion: the maintenance of digital commons and cultural sustainability, grounded in the principle that data governance must respect the contextual norms governing different social domains~\citep{alvarezpallete2026}. The concept of contextual integrity, as developed by Nissenbaum~\citep{nissenbaum2004}, provides the philosophical foundation: data flows are appropriate when they conform to the informational norms of the context in which they occur, and violations arise when data crosses contextual boundaries without respecting the norms of either the originating or the receiving context.

The digital social contract proposes to enforce contextual integrity through purpose limitation requirements, mandatory impact assessments, and legal remedies for boundary violations. The structural difficulty, as noted in Section~\ref{sec:foundations}, is that the sovereign State is itself a persistent violator of contextual integrity, routinely accessing data across contextual boundaries in the name of national security, public order, and fiscal administration. A more fundamental difficulty is that contextual integrity, as a normative principle, requires governance structures that are themselves contextually embedded: a unitary regulator applying uniform rules across all contexts cannot respect contextual variation, because the specificity of context is precisely what a unitary jurisdiction abstracts away.

The commons-based alternative realizes contextual integrity through structural pluralism. Instead of a single regulatory authority applying context-sensitive rules from above, each data trust governs a specific social context and defines the informational norms appropriate to that context. Health data is governed by a health data trust whose members include patients and care providers; educational data is governed by an educational data trust; commercial data by a commercial trust; creative data by a cultural trust. Cross-contextual data sharing is not prohibited but must be negotiated between trusts through federated protocols that require bilateral consent: the originating trust must authorize the release under its own contextual norms, and the receiving trust must agree to handle the data under norms compatible with its own context.

\begin{definition}[Contextual federation protocol]
\label{def:federation}
A \emph{contextual federation protocol} is a bilateral agreement $\mathcal{F}_{ij}$ between data trusts $\mathcal{T}_i$ and $\mathcal{T}_j$ governing contexts $c_i$ and $c_j$ respectively, specifying:
\begin{align}
&\text{(i)}\quad \text{The data types eligible for cross-contextual sharing,} \\
&\text{(ii)}\quad \text{The purposes for which shared data may be processed in } c_j, \\
&\text{(iii)}\quad \text{The norms of } c_i \text{ that must be preserved in } c_j, \\
&\text{(iv)}\quad \text{Revocation conditions under which either trust may terminate sharing.}
\end{align}
The protocol requires the affirmative consent of both trusts' assemblies, and any member of either trust may invoke the revocation conditions.
\end{definition}

This federated structure realizes contextual integrity not as a regulatory constraint but as an architectural property: the governance system is composed of context-specific trusts, and cross-contextual data flows require bilateral negotiation, making boundary violations structurally difficult rather than merely illegal. The mutual-aid dimension, drawn from Kropotkin~\citep{kropotkin1902}, enters through the motivation for cross-contextual sharing: trusts share data not for commercial extraction but for mutual benefit, as when a health trust shares anonymized epidemiological data with a public-health trust to enable community health planning, or when an educational trust shares learning analytics with a pedagogical research trust to improve teaching methods. The norm is reciprocity: each trust contributes to and benefits from the federation, and free-riding is constrained by the graduated sanctions specified in each trust's Ostromian governance rules.

\begin{proposition}[D3 satisfies contextual integrity]
\label{prop:d3}
A system of context-specific data trusts connected by contextual federation protocols (Definition~\ref{def:federation}) satisfies contextual integrity structurally: boundary violations require the affirmative consent of both originating and receiving trusts' assemblies, and any member may invoke revocation, making violations detectable, reversible, and costly to the violating party.
\end{proposition}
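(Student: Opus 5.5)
The argument proceeds in the style of the proofs of Propositions~\ref{prop:d1} and~\ref{prop:d2}: the statement is taken apart into its separate claims, and each is discharged by pointing to a specific clause of Definition~\ref{def:federation}, Definition~\ref{def:trust}, or the Ostromian rules $R$. The claims to establish are: (a) every cross-contextual flow needs affirmative consent from both assemblies; (b) violations are \emph{detectable}; (c) they are \emph{reversible}; (d) they are \emph{costly} to the violator. Together these amount to satisfying contextual integrity structurally in the sense of Definition~\ref{ax:context}.

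Step (a) comes straight from the definitions. Every data flow from context $c_i$ to $c_j$ passes through some agreement $\mathcal{F}_{ij}$. By Definition~\ref{def:federation}, that agreement takes effect only with the affirmative consent of both the $\mathcal{T}_i$ and $\mathcal{T}_j$ assemblies. Clauses (i)--(iii) then confine the flow to the specified data types and purposes and to the preserved norms of $c_i$.

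The architectural premise supplies the rest: no governance unit spans all contexts, so there is no authority able to override the requirement by fiat. A boundary violation is therefore either a flow outside any $\mathcal{F}_{ij}$ or a flow that breaches clauses (i)--(iii). Both are departures from an explicit, member-visible agreement, not an exercise of legitimate jurisdiction. This gives the structural-difficulty half of Definition~\ref{ax:context}.

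Steps (b)--(d) each lean on one mechanism.

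\textbf{Detectability.} Ostrom's monitoring principle in each trust's rules $R$ requires transparent observation of the stewardship protocols $P$. A flow that departs from a public bilateral agreement is therefore observable by members on both sides.

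\textbf{Reversibility.} Clause (iv) lets any member of either trust invoke revocation and terminate the sharing. The exit function $E$ of Definition~\ref{def:trust} adds a second route: each member can withdraw their own data with no copy retained.

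\textbf{Cost.} Graduated sanctions apply under $R$. Revocation forfeits the reciprocal benefits of the mutual-aid federation, and members can exit. By the Hirschman argument already given for D2, exit strips the violating trust of data and bargaining power.

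The main obstacle is reversibility. Revocation stops \emph{future} flows, but it cannot recall copies already made or information already absorbed into downstream processing or trained models. This is the irreversibility limitation the paper records for model weights. I would handle it by stating the claim precisely: reversibility applies to the \emph{governance relation}, meaning continued authorized flow and continued stewardship. It does not apply to information already disclosed.

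Clause (iii) can recover part of the remainder. The receiving trust is bound to preserve $c_i$'s norms, which can include deletion on revocation, and a breach of that obligation is itself a detectable violation that triggers sanctions. The proof should concede the residual gap openly rather than paper over it, consistent with the formalism's status as regimented argument rather than derivation.
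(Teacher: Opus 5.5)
Your decomposition agrees in substance with the paper's proof. The paper likewise separates two cases. In the first, no $\mathcal{F}_{ij}$ exists at all, so any flow is a manifest, unauthorized transfer, with the concession that exfiltration stays technically possible. In the second, an existing protocol is breached: both assemblies consented, clauses (i)--(iv) constrain the flow, and any member may trigger revocation. It closes, as you do, with the absence of any unitary authority able to override norms by fiat.

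The differences lie in how the three adjectives are earned.
\begin{itemize}
\item \emph{Detectability.} The paper infers from the member-level revocation right that violations ``cannot be concealed from the membership.'' Your appeal to Ostrom's monitoring principle in $R$ is better grounded, since a right to revoke presupposes knowledge of the violation rather than supplying it.
\item \emph{Reversibility.} The paper simply states that revocation ``requires the receiving trust to delete or return shared data.'' Clause (iv) of Definition~\ref{def:federation} contains no such obligation; it only lets a trust terminate sharing. You refuse that stipulation, recover deletion only insofar as the norms preserved under clause (iii) demand it, and concede the trained-model residual. Your route is more faithful to the definition, but it proves a weaker ``reversible'' than the statement asserts. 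The paper gets the full claim by adding a term, and you can too, by making deletion-or-return on revocation an explicit clause of every $\mathcal{F}_{ij}$.
\item \emph{Cost.} You reproduce the paper's loss of federation access but drop its reputational channel. That channel is what extends the cost beyond one bilateral relationship. It also fits a violating \emph{trust} better than the graduated sanctions of $R$, which D2 defines as sanctions on members.
\end{itemize}

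Two slips should be repaired.
\begin{itemize}
\item The opening claim of step (a), that every flow from $c_i$ to $c_j$ passes through some $\mathcal{F}_{ij}$, holds only for authorized flows. It contradicts the very next case you need. A flow outside any agreement departs from no bilateral agreement, so its detectability must come from monitoring the originating trust's stewardship protocols $P$, not from a ``public bilateral agreement.''
\item The exit function $E$ of Definition~\ref{def:trust} is not a second route to reversibility. It guarantees only that the member's own trust retains no copy. It says nothing about copies already transferred to $\mathcal{T}_j$, which are exactly what a boundary violation produces. It would reach them only if you again read clause (iii) as carrying that no-retention guarantee into $c_j$.
\end{itemize}
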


\begin{proof}
A boundary violation, defined as a data flow that crosses contextual boundaries without respecting the norms of either context, requires that data move from $\mathcal{T}_i$ to $\mathcal{T}_j$ in a manner inconsistent with $\mathcal{F}_{ij}$. If no federation protocol exists, there is no authorized channel for inter-trust data flow, so any cross-contextual flow is a manifest violation rather than a sanctioned transfer. We do not claim exfiltration is technically impossible, since a defecting trust may always attempt to leak data outside the protocol; we claim that such leakage is detectable as unauthorized and triggers the revocation and reputational consequences described below, rather than passing as legitimate sharing. If a protocol exists, conditions (i)--(iv) constrain the flow, and both assemblies must have consented. A violation of the protocol by one trust (sharing data for unauthorized purposes, failing to preserve originating norms) triggers the revocation conditions, which any member of either trust may invoke. Revocation terminates the data flow and requires the receiving trust to delete or return shared data. The cost of violation is therefore immediate (loss of access to the federation) and reputational (other trusts will be reluctant to federate with a trust known to have violated protocols). Since the federation is voluntary and bilateral, no unitary authority can override contextual norms by fiat, and since any member can invoke revocation, violations cannot be concealed from the membership. The structure thus satisfies contextual integrity through architectural design rather than regulatory enforcement.
\end{proof}

Social cohesion is realized through the network of mutual-aid relationships that federated data sharing creates. Unlike a regulatory framework in which social cohesion is an aspirational goal to be pursued through policy, the federated trust architecture generates cohesion as a byproduct of its operation: trusts that cooperate effectively develop relationships of mutual trust and reciprocity that strengthen the social fabric, while trusts that defect are excluded from the federation, suffering the loss of the benefits that cooperation provides. This is the Kropotkinian insight applied to data governance: cooperation is not an altruistic sacrifice imposed from above but a rational strategy that emerges from the structure of mutual dependence.

\section{D4. Fiduciary: Voluntary Commitment and Reputational Accountability}
\label{sec:fiduciary_layer}

The fourth dimension of the digital social contract addresses automation limits: the requirement for meaningful human control over algorithmic decision-making, including the right to explanation, the right to contest, and the right to human review of automated decisions~\citep{alvarezpallete2026}. The contract operationalizes these requirements through fiduciary duties: data controllers are to assume duties of care, loyalty, and disclosure toward data subjects, with enforcement through the legal system. The fiduciary standard, as the contract argues, imposes higher obligations than consumer protection law because fiduciary duties cannot be disclaimed through terms of service and are imposed by law rather than negotiated between parties.

The commons-based critique of this arrangement is twofold. First, fiduciary law in its existing form is a product of the same legal system that created the corporate governance norms under which data extraction flourishes. The fiduciary duties of corporate directors run to shareholders, not to data subjects; the extension of fiduciary obligations to data controllers represents a reform within a legal framework whose structural logic favours capital. Second, and more fundamentally, the enforcement of fiduciary duties through the legal system reproduces the coercive dynamic that the non-instrumentalization thesis (Definition~\ref{ax:noninst}) identifies as problematic: a data controller who violates fiduciary duties faces fines, injunctions, and ultimately the coercive apparatus of the State, creating an institutional relationship in which compliance is sustained by punishment rather than by the mutual benefit that voluntary commitment generates.

The commons-based alternative does not abolish statutory fiduciary duty but adds to it a voluntary fiduciary commitment enforced by reputational accountability and exit rights, a commitment that binds whether or not the law also binds and that remains effective in the many settings where statutory enforcement is slow, captured, or absent. The conceptual basis is drawn from Errico Malatesta's pragmatic anarcho-communism~\citep{malatesta1891}, which insists that social cooperation requires trust but that trust must be earned through demonstrated conduct rather than imposed by legal compulsion. Malatesta's key insight is that voluntary agreements, freely entered into and freely terminable, generate more reliable cooperation than coerced compliance, because voluntary agreements are sustained by mutual benefit whereas coerced compliance is sustained by the threat of punishment and dissolves the moment the threat is removed.

\begin{definition}[Voluntary fiduciary commitment]
\label{def:fiduciary}
A \emph{voluntary fiduciary commitment} is a public declaration $\mathcal{C}(s, \mathcal{T})$ by a service provider $s$ to a data trust $\mathcal{T}$, specifying:
\begin{align}
&\text{(i)}\quad \text{Duty of loyalty: } s \text{ will not use members' data for purposes not authorized by } \mathcal{T}, \\
&\text{(ii)}\quad \text{Duty of care: } s \text{ will implement security and minimization standards specified by } \mathcal{T}, \\
&\text{(iii)}\quad \text{Duty of disclosure: } s \text{ will provide } \mathcal{T} \text{ with auditable records of all data processing,} \\
&\text{(iv)}\quad \text{Automation limits: no algorithmic decision affecting a member will be made without} \\
&\qquad\quad \text{the member's right to explanation, contest, and human review.}
\end{align}
The commitment is enforced by the trust's right to terminate the relationship and by the reputational consequences of violation in the federation.
\end{definition}

The enforcement mechanism operates through three channels. The first is exit: if a service provider violates its fiduciary commitment, the data trust terminates the relationship, and the provider loses access to the trust's data and the revenue or benefit that access provided. The second is reputation: violations are reported to the federation, and other trusts can condition their own engagement with the provider on its track record. In a federated ecosystem of many trusts, a provider that has violated commitments to one trust will find it difficult to establish relationships with others. The third is substitutability: the open-protocol infrastructure of the Dignity Stack (D1, D2) ensures that providers are substitutable, so that no trust is locked into a relationship with a provider whose fiduciary conduct is inadequate.

\begin{proposition}[D4 satisfies automation limits]
\label{prop:d4}
A system of voluntary fiduciary commitments (Definition~\ref{def:fiduciary}) in which service providers are substitutable, reputation is transparent across the federation, and data trusts retain unconditional exit rights satisfies the automation limits requirements of the digital social contract, specifically the rights to explanation, contest, and human review, without recourse to State-imposed fiduciary duties.
\end{proposition}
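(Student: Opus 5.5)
The argument follows the pattern of the proofs of Propositions~\ref{prop:d1}--\ref{prop:d3}. I will take the three automation-limit rights named in the statement (explanation, contest, human review) one at a time. For each I will show that clause (iv) of Definition~\ref{def:fiduciary} makes it a declared obligation of the provider. I will then show that the three hypotheses of the proposition (substitutability, transparent reputation, unconditional exit) supply an enforcement route that does not run through State-imposed fiduciary duty. The point is to show that, on the commons reading of ``satisfies'' fixed by Definition~\ref{def:compliance}, the verification components $v_{4j}$ each return $1$.

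\textbf{Step 1: declaration.} Clause (iv) states the three rights verbatim. Clause (iii), the duty of disclosure, gives the trust auditable records of all data processing. So the trust can check whether an automated decision affecting a member was accompanied by an explanation, a contest channel, and human review, without relying on the provider's own account. Verifiability is what distinguishes this from a bare promise.

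\textbf{Step 2: enforcement.} I would argue by the provider's incentives. Suppose $s$ violates (iv) toward a member of $\mathcal{T}$.
\begin{enumerate}
\item The violation is detectable, via the records of clause (iii) and the member's own experience of an unexplained or uncontestable decision.
\item On detection, $\mathcal{T}$ exercises its unconditional exit. The provider loses the trust's data and the benefit of access to it. This cost is immediate, as in the exit argument of Proposition~\ref{prop:d2}.
\item Transparent reputation across the federation propagates the violation. Every other trust can condition its dealings with $s$ on that record, so the loss grows with the size of the federation. This reuses the reputational mechanism of Proposition~\ref{prop:d3}.
\item Substitutability, secured by the open-protocol infrastructure of D1 and D2, makes exit credible. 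The trust is not locked in, so the threat in the previous point is not cheap talk.
\end{enumerate}
Together these make compliance the provider's better strategy whenever the value of continued federation access exceeds the gain from violating. Conversely, a member harmed in the meantime keeps both the contest right inside the trust's Ostromian conflict-resolution rules and personal exit, so the rights bind even against a trust that is slow to act.

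\textbf{Main obstacle.} The hard step is the repeated-game core of Step~2. I need the sanction (loss of the trust plus reputational exclusion) to outweigh the one-shot gain from covertly automating decisions. I also need detection to be reliable rather than assumed. I would handle this by stating the conclusion conditionally, which matches the proposition's hypotheses: transparency gives detection, substitutability gives credible exit, and federation size gives the magnitude of the sanction. I would then concede, as the paper does for Proposition~\ref{prop:d1}, that a provider large enough to be non-substitutable falls outside the hypotheses, and defer that case to the capture analysis of Section~\ref{sec:integration}.

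Finally, I would note that nothing in Steps 1 and 2 invokes a court or regulator. This gives the clause ``without recourse to State-imposed fiduciary duties.'' It does not deny that statutory duty may coexist, in keeping with the additive framing of this section.
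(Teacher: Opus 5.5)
Your proposal is correct and follows essentially the paper's own argument. Clause (iv) of Definition~\ref{def:fiduciary} declares the three rights, clause (iii) makes violations verifiable, contests are handled inside the trust (the paper sends them to the assembly, which judges the decision against the commitment and ends the relationship if it was breached), and exit, made credible by substitutability and amplified by federation-wide reputation, supplies enforcement with no State involved. The only correction is that your ``main obstacle'' is not one: $v_{4j}=1$ asks only that the protocol provide a mechanism meeting each requirement, and the paper claims no more than compliance incentives proportional to the provider's stake in federation relationships. So drop the sanction-exceeds-gain inequality rather than conditioning on it; it is not among the hypotheses (and detection comes from clause (iii), not from reputational transparency), and conditioning on it would prove less than the proposition states.
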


\begin{proof}
The right to explanation is specified in condition (iv) of the fiduciary commitment and is verifiable through the auditable records required by condition (iii). A member who receives an algorithmic decision without explanation can verify the violation by requesting the audit records; if the provider fails to produce them, the trust invokes its exit right. The right to contest is realized through the trust's governance structures: a member brings the contested decision to the assembly, which evaluates it against the fiduciary commitment and, if the commitment has been violated, terminates the relationship with the provider. The right to human review is specified in condition (iv) and enforced by the same mechanism. The substitutability of providers ensures that the trust's exit right is effective rather than nominal: because the system runs on open protocols and community-governed infrastructure, switching to a different provider does not require migrating to a different ecosystem. The reputational mechanism ensures that providers who violate commitments face consequences beyond the loss of a single trust's business, creating incentives for compliance that are proportional to the provider's interest in maintaining relationships across the federation.
\end{proof}

The Malatestian insight at work here is that voluntary commitment, reinforced by reputation and exit, generates more reliable compliance than legal compulsion. A data controller whose fiduciary conduct is motivated by the desire to avoid fines will calibrate its compliance to the expected cost of enforcement, investing in compliance only to the point where the marginal cost equals the expected fine discounted by the probability of detection. A service provider whose fiduciary conduct is motivated by the desire to maintain reputation in a federation of trusts faces a different calculus: the cost of violation is not a probabilistic fine but the certain loss of relationships, a consequence that is both more immediate and more proportional to the severity of the violation.

\section{D5. Participatory Governance: Nested Assemblies Without a Centre}
\label{sec:governance_layer}

The fifth dimension of the digital social contract is political legitimacy: the requirement that data governance rules be determined through democratic processes that include data subjects, with particular attention to the representation of marginalized communities~\citep{alvarezpallete2026}. The contract proposes multi-stakeholder governance bodies including data subjects, civil society, technical experts, and marginalized community representatives, with platform executives excluded for conflicts of interest. The difficulty with this proposal is not its ambition but its institutional form: multi-stakeholder governance bodies, as actually constituted in telecommunications, finance, and environmental regulation, routinely reproduce the power asymmetries they are designed to correct. The selection of ``representatives,'' the framing of agendas, the production of ``consensus'' documents, and the translation of recommendations into binding rules are all processes vulnerable to capture by organized interests with greater resources, expertise, and persistence than the diffuse public they claim to represent.

The commons-based alternative draws on Bookchin's libertarian municipalism~\citep{bookchin1982,bookchin2015} and Bakunin's federalist collectivism~\citep{bakunin1873} to construct a participatory governance architecture that is both radically democratic and scalable. The basic unit is the assembly: the face-to-face deliberative body of a data trust's members, meeting regularly to decide governance questions by consensus or qualified majority. Assemblies are small enough for genuine deliberation, following Bookchin's principle that democracy requires the physical co-presence of deliberating citizens and that assemblies exceeding a few hundred members tend toward representative delegation, which is to say, toward the dissolution of direct democracy into its negation.

For governance questions that transcend the individual trust, whether questions of federation policy, cross-contextual data sharing norms, or responses to external threats, assemblies send mandated delegates to a council of the federation. The delegates are mandated, not representative: they carry the specific decisions of their assemblies and have no authority to negotiate or compromise beyond their mandates. If the council confronts a question on which the delegates' mandates are insufficient, the question is referred back to the assemblies. This is Bakunin's federalism~\citep{bakunin1873}: authority flows from the base to the federation, never from the federation to the base, and delegates are recallable at any time by the assemblies that sent them.

\begin{definition}[Nested assembly governance]
\label{def:governance}
A \emph{nested assembly governance structure} is a tuple $\mathcal{A} = (A_1, \ldots, A_n, \mathcal{C}, \delta, \rho)$ where each $A_k$ is a local assembly of a data trust, $\mathcal{C}$ is a federation council, $\delta: A_k \to \mathcal{C}$ is the delegation function assigning mandated delegates from assemblies to the council, and $\rho: A_k \times \delta(A_k) \to \{0,1\}$ is the recall function permitting any assembly to recall its delegate at any time. The governance structure satisfies:
\begin{align}
&\text{(i)}\quad \text{Every governance decision affecting members of } A_k \text{ is made by } A_k, \text{ not by } \mathcal{C}. \\
&\text{(ii)}\quad \mathcal{C} \text{ may only coordinate between assemblies, not legislate over them.} \\
&\text{(iii)}\quad \text{Any } A_k \text{ may withdraw from the federation without forfeiting internal governance.}
\end{align}
\end{definition}

The political legitimacy of this structure derives not from the State's endorsement but from the direct participation of every person in the governance decisions that affect them. Where the digital social contract's multi-stakeholder bodies achieve legitimacy through the inclusion of ``representatives'' who stand for constituencies they cannot meaningfully consult, the nested assembly achieves legitimacy through direct participation at the local level and mandated delegation at the federation level. The difference is not one of degree but of kind: representation is a mechanism for managing the impossibility of direct participation at scale, while mandated delegation preserves direct participation by refusing to aggregate decisions beyond the scale at which face-to-face deliberation is possible.

\begin{proposition}[D5 satisfies political legitimacy]
\label{prop:d5}
A nested assembly governance structure (Definition~\ref{def:governance}) satisfies the political legitimacy requirements of the digital social contract, specifically democratic participation, representation of marginalized communities, and accountability of governance to data subjects, through direct participatory mechanisms rather than representative or multi-stakeholder bodies.
\end{proposition}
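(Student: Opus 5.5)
The argument mirrors the earlier layer propositions: a criterion-by-criterion verification of the three requirements named in the statement. Democratic participation, representation of marginalized communities, and accountability to data subjects are taken as the components $v_{5j}$ of the verification predicate $V_5$ in Definition~\ref{def:stack}. Each is discharged from the clauses of Definition~\ref{def:governance}, together with the recall function $\rho$ and the delegation function $\delta$. As in Propositions~\ref{prop:d1}--\ref{prop:d4}, the argument is philosophical rather than deductive. Where a clause only removes a structural obstacle rather than guaranteeing an outcome, the proof will say so.

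For \emph{democratic participation}, the argument rests on clause (i): every decision affecting members of $A_k$ is made by $A_k$ itself. Combined with assemblies bounded to deliberative scale, this means each data subject participates directly in every decision that binds them, rather than through a proxy. Clause (ii) then ensures that federation-level coordination cannot become a route by which decisions are taken over their heads. The council $\mathcal{C}$ only aggregates mandates, and any question outside a mandate is referred back to the assemblies.

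For \emph{accountability}, the argument has three nested layers of exit and recall:
\begin{itemize}
\item the recall function $\rho$ lets an assembly remove a delegate at any time, so delegates cannot drift from their mandates;
\item clause (iii) lets an assembly leave the federation without losing internal governance;
\item beneath that, the member-level exit function $E$ of Definition~\ref{def:trust} applies.
\end{itemize}
Governance is therefore answerable to data subjects at every scale, and it is disciplined by the same exit mechanism invoked in Proposition~\ref{prop:d2}.

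The step I expect to be the main obstacle is \emph{representation of marginalized communities}. Direct participation does not automatically protect minorities: a local majority, or an assembly whose informal structure favours the articulate, can outvote or silence them. The tyranny of structurelessness is the relevant worry here, and it is flagged in Figure~\ref{fig:taxonomy}. My first move is to argue structurally that marginalized communities need not be represented by others, because they can constitute their own trusts and assemblies. The boundary principle of Ostrom and the free formation of context-specific trusts from Section~\ref{sec:contextual_layer} allow this. Once constituted, such a trust holds full authority over its own affairs under clause (i), and under clause (ii) it cannot be legislated over by a federation majority. The protection is thus an autonomy guarantee, not a seat at someone else's table.

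Inside mixed assemblies, I would rely on consensus or qualified-majority rules and on the member's exit right as a backstop. I would explicitly concede that this mitigates, rather than eliminates, intra-assembly domination. I would close by contrasting this with multi-stakeholder bodies, where representatives are selected rather than recallable. That contrast establishes the comparative claim the proposition makes, which is that these mechanisms are better than the multi-stakeholder alternative, not that they are perfect.
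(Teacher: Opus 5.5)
Your proposal is correct and follows essentially the same route as the paper: a criterion-by-criterion check in which direct assembly participation yields democratic participation, self-constituted trusts yield representation of marginalized communities (the paper's example is indigenous data sovereignty), and mandated, recallable delegation together with the withdrawal right~(iii) yields accountability. Your additions go slightly beyond the paper without changing the argument: you use the member-level exit $E$ as a further accountability backstop, and you concede explicitly that majority rule inside mixed assemblies only mitigates domination, where the paper simply asserts that marginalized members participate ``on equal terms.''
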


\begin{proof}
Democratic participation: every member of every trust participates directly in the assembly that governs their data, without delegation to representatives. This is a strictly stronger form of participation than the multi-stakeholder governance proposed by the digital social contract. Representation of marginalized communities: marginalized individuals participate directly in their assemblies on equal terms with all other members. Moreover, the cooperative data trust is constituted voluntarily, which means that marginalized communities may form their own trusts governed by their own norms, without depending on inclusion in institutions designed and controlled by others. Indigenous data sovereignty, for instance, is realized not through a seat on a multi-stakeholder board but through a self-governing data trust constituted by and for the indigenous community. Accountability: delegates to the federation council are mandated and recallable, creating a shorter and more direct chain of accountability than any representative or regulatory body. If a delegate exceeds their mandate, the assembly recalls them immediately; the lag between violation and correction is measured in days, not in the years that characterize regulatory proceedings. The withdrawal right~(iii) ensures that no trust is bound to governance decisions it did not consent to, realizing the principle of political self-determination at the institutional level.
\end{proof}

\section{D6. Economic Justice: Mutualist Exchange and Cooperative Surplus}
\label{sec:economic_layer}

The sixth dimension of the digital social contract is economic justice: the requirement that data subjects receive fair compensation for the value their data generates, realized through data dividends, cooperative ownership, reciprocal access rights, and alternative economic models that counteract the extractive logic of platform capitalism~\citep{alvarezpallete2026}. The contract proposes several mechanisms: a data dividend funded by a tax on data-derived profits, cooperative data models in which members collectively own data infrastructure, data trusts with revenue sharing, public data commons, and reciprocal data access.

Proudhon's mutualism~\citep{proudhon1840,proudhon1851} provides the commons-based organizational protocol for this layer. Proudhon's economic thought centres on two principles: the labour theory of exchange value, according to which the just price of a good or service is determined by the labour required to produce it, and the mutual credit system, which replaces State-issued currency and bank-intermediated credit with a reciprocal credit system in which participants extend credit to one another without interest, eliminating the rent extracted by financial intermediaries. Applied to data economics, mutualism implies that the value of data should flow to the persons whose labour (in the extended sense of lived experience, creative production, and social interaction) generated it, and that the medium of exchange through which data value circulates should be a mutual credit system rather than a State currency whose supply is controlled by central banks and whose distribution is intermediated by profit-seeking financial institutions.

\begin{definition}[Mutualist data economy]
\label{def:economy}
A \emph{mutualist data economy} is a system $\mathcal{E} = (\mathcal{T}_1, \ldots, \mathcal{T}_n, \mu, \sigma)$ where each $\mathcal{T}_k$ is a data trust, $\mu$ is a mutual credit system enabling inter-trust exchange without interest or financial intermediation, and $\sigma: \mathcal{T}_k \to \mathbb{R}^{|M_k|}$ is a surplus distribution function that allocates each trust's surplus (the difference between the value generated by collective data stewardship and the costs of trust operation) to its members according to rules determined by the assembly.
\end{definition}

The mutual credit system $\mu$ is the institutional mechanism that prevents the re-enclosure of data value by financial intermediaries. In the digital social contract's proposed data dividend model, data-derived value is first captured by corporations as profit, then taxed by the State, and finally distributed to data subjects as a transfer payment. This circuitous route introduces three intermediaries, each of which extracts rent: the corporation retains the majority of the value as profit after tax, the State retains a portion for administrative costs, and the financial system extracts fees for the transfer. The mutualist alternative eliminates all three intermediaries: data trusts transact directly with service providers and with one another through the mutual credit system, and the surplus is distributed to members according to rules they themselves determine.

The surplus distribution function $\sigma$ is not a data dividend in the sense of a universal payment determined by the State. It is a cooperative surplus determined by each trust's assembly, which may choose to distribute it equally, to weight it by contribution, to reinvest it in infrastructure, or to direct it toward solidarity funds that support trusts with fewer resources. The key difference from the digital social contract's proposal is that the distribution is determined democratically by the persons whose data generated the value, not by a State tax-and-transfer mechanism whose rules are set by legislatures subject to corporate lobbying.

\begin{proposition}[D6 satisfies economic justice]
\label{prop:d6}
A mutualist data economy (Definition~\ref{def:economy}) satisfies the economic justice requirements of the digital social contract, specifically fair compensation, alternative economic models, and counteraction of extractive logic, through voluntary cooperative mechanisms rather than State taxation and redistribution.
\end{proposition}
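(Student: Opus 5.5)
The argument mirrors the proofs of Propositions~\ref{prop:d1}--\ref{prop:d5}. The three requirements named in the statement are treated in turn: fair compensation, alternative economic models, and counteraction of extractive logic. For each, I would exhibit the component of $\mathcal{E} = (\mathcal{T}_1, \ldots, \mathcal{T}_n, \mu, \sigma)$ that discharges the corresponding verification predicate $v_{6j}$ of Definition~\ref{def:stack}. Each such component is an institutional mechanism that is voluntary rather than fiscal.

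\textbf{Fair compensation.} I would invoke the surplus distribution function $\sigma$. By Definition~\ref{def:economy}, each trust's surplus, which is value generated minus operating costs, is allocated entirely to its members under rules set by the assembly. By Proposition~\ref{prop:d5}, every member participates directly in that assembly. By the exit function $E$ of Definition~\ref{def:trust}, any member who judges the allocation unfair can leave with their data. Taken together, these make compensation fair in the procedural sense that its terms are consented to by those whose data generated the value. I would argue that this procedural standard fits the dignity framing better than a State-fixed dividend.

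\textbf{Alternative economic models.} I would note that cooperative ownership (D2), reciprocal access (D3 federation protocols), and mutual credit $\mu$ instantiate the very alternatives the contract lists. The requirement is therefore met by construction.

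\textbf{Counteraction of extractive logic.} Here I would formalize the intermediary-counting argument already given in the text. In the tax-and-transfer route there are three rent-extracting intermediaries: the corporation, the State, and the financial system. In $\mathcal{E}$, value flows trust-to-provider and trust-to-trust through interest-free $\mu$, so each of those extraction points is absent. The residual claim of capital is withdrawn because surplus is assigned to members by $\sigma$, not to investors.

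\textbf{Main obstacle.} The hard step is this third requirement, together with the dependency relation $\Phi$. Extraction can re-enter wherever trusts must buy compute or energy from non-mutualist suppliers priced in State currency. That is the Tucker monopoly point of Section~\ref{sec:sovereignty_layer}. The argument must therefore lean on D1 and D2, namely community infrastructure and open protocols, to make such suppliers substitutable. Without that, a dominant supplier could capture the surplus through pricing. I would first try to state the claim conditionally: economic justice holds to the extent that D1 and D2 are satisfied, so that $\Phi(D_1,D_6)=\Phi(D_2,D_6)=1$ carries the load. I would then flag explicitly that where substitutability fails, as at chip fabrication, the counteraction of extraction is only partial. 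This defers the full capture analysis to Section~\ref{sec:integration}, as the paper promises.
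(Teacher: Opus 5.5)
Your argument is essentially the paper's: it discharges the same requirements with the same mechanisms. These are $\sigma$ under assembly rule for fair compensation, $\mu$ as the alternative model, and, against extraction, the absence of external residual claimants together with the removal of the corporate, State, and financial-system rents. The paper differs only in filing the intermediary-rent point under fair compensation and in treating reciprocal access, via the D3 federation protocol, as a separate fourth item.

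One correction to your obstacle paragraph. The paper enumerates only $\Phi(D_1,D_6)=\Phi(D_5,D_6)=1$ and explicitly sets $\Phi(D_2,D_6)=0$. D2 enters only because $\mathcal{E}$ is built from trusts, and transitively via D3 and D5. The paper also states Proposition~\ref{prop:d6} unconditionally, because $v_{6j}$ asks only that the protocol supply a mechanism. Supplier leverage is left to the dependency clause of Definition~\ref{def:compliance} and to Proposition~\ref{prop:structural}, so your conditional hedge belongs to that system-level analysis rather than to this layer.
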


\begin{proof}
Fair compensation: the surplus distribution function $\sigma$ ensures that value flows to data subjects, and the assembly governance ensures that the distribution rules are determined by the data subjects themselves, not by external authorities. The mutualist structure eliminates the intermediary rents that reduce the share of value reaching data subjects in the tax-and-dividend model. Alternative economic models: the mutual credit system $\mu$ constitutes a concrete alternative to the extractive logic of platform capitalism, replacing the extraction-profit-tax-dividend chain with a direct cooperative exchange. Counteraction of extractive logic: the data trust's governance ensures that data is not treated as a commodity to be sold for maximum revenue but as a commons whose benefits are distributed according to collectively determined rules. The absence of external shareholders ensures that no residual claim on the trust's surplus exists beyond the members' collectively determined distribution. Reciprocal access: the federation protocol (D3) enables inter-trust data sharing on terms negotiated by the trusts' assemblies, realizing reciprocal access without the regulatory mandates the digital social contract proposes.
\end{proof}

\section{Integration: The Dignity Stack as a Coherent Architecture}
\label{sec:integration}

The six layers of the Dignity Stack are not independent modules but an integrated architecture in which each layer depends on and reinforces the others. This section specifies the inter-layer dependencies and demonstrates systemic coherence.

\begin{definition}[Inter-layer dependency]
\label{def:dependency}
The inter-layer dependency function $\Phi$ of the Dignity Stack specifies the following relations:
\begin{align}
\Phi(D_1, D_2) &= 1: \text{ D2 requires D1 for material independence.} \\
\Phi(D_2, D_3) &= 1: \text{ D3 requires D2 for trust constitution.} \\
\Phi(D_2, D_4) &= 1: \text{ D4 requires D2 for effective exit rights.} \\
\Phi(D_3, D_5) &= 1: \text{ D5 requires D3 for inter-trust coordination.} \\
\Phi(D_1, D_6) &= 1: \text{ D6 requires D1 for non-corporate infrastructure.} \\
\Phi(D_5, D_6) &= 1: \text{ D6 requires D5 for democratic surplus rules.}
\end{align}
\end{definition}

Each dependency is substantive. D2 requires D1 because a data trust whose infrastructure is hosted on corporate cloud servers is not truly sovereign: the cloud provider can access the data, surveil usage patterns, or terminate service unilaterally. Only community-governed infrastructure guarantees the material independence that sovereignty presupposes. D3 requires D2 because contextual integrity is realized through federation between sovereign trusts; trusts that are not sovereign cannot credibly commit to contextual norms. D4 requires D2 because the exit right that enforces fiduciary commitments is effective only if the exiting member has somewhere to go; without sovereign infrastructure and portable data formats, exit is a formal right without material substance. D5 requires D3 because inter-trust governance coordination is conducted through the federation protocols that D3 establishes. D6 requires both D1 and D5: D1 provides the non-corporate infrastructure on which the mutual credit system operates, and D5 provides the democratic governance through which surplus distribution is determined.

\begin{theorem}[Systemic coherence]
\label{thm:coherence}
A Dignity Stack $\mathcal{D}$ that satisfies the normative compliance condition (Definition~\ref{def:compliance}), i.e., each layer satisfies its verification predicate and all inter-layer dependencies are met, implements the full digital social contract without recourse to State regulatory enforcement while satisfying the non-instrumentalization thesis (Definition~\ref{ax:noninst}), the relational personhood thesis (Definition~\ref{ax:relational}), and the structural contextual integrity thesis (Definition~\ref{ax:context}).
\end{theorem}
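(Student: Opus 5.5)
The argument has three parts, taken in order: the definition of normative compliance, the six layer propositions, and Proposition~\ref{prop:entailment}.

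\textbf{Step 1: the first conclusion follows from the definitions.} By the convention fixed after Definition~\ref{def:compliance}, ``implements the digital social contract'' means nothing beyond the conjunction of the six verification predicates and the satisfaction of $\Phi$. So this conclusion needs no argument beyond unpacking the definitions.

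\textbf{Step 2: exclusion of State regulatory enforcement.} Here I check, layer by layer, that each $V_i=1$ is witnessed by the corresponding proposition, and that each of those propositions discharges its requirements without external regulatory enforcement:
\begin{itemize}
\item Proposition~\ref{prop:d1} for technological oversight (assembly authority plus the convivial threshold);
\item Proposition~\ref{prop:d2} for data sovereignty (exit and the Ostromian rules);
\item Proposition~\ref{prop:d3} for contextual integrity (bilateral consent and revocation);
\item Proposition~\ref{prop:d4} for automation limits (voluntary commitment, reputation, substitutability);
\item Proposition~\ref{prop:d5} for political legitimacy (mandate and recall);
\item Proposition~\ref{prop:d6} for economic justice (mutual credit and assembly-set $\sigma$).
\end{itemize}
Those propositions carry hypotheses: substitutable providers, portable exit, community infrastructure, inter-trust coordination, democratic surplus rules. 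The key move is to show that these hypotheses are supplied \emph{internally} by other layers, exactly along the edges of Definition~\ref{def:dependency}:
\begin{itemize}
\item D1 supplies the material independence D2 assumes and the non-corporate infrastructure D6 needs.
\item D2 supplies the constituted sovereign trusts that D3 federates and the effective exit D4 relies on.
\item D3 supplies the federation channel used by D5's council.
\item D5 supplies the democratic rules behind $\sigma$ in D6.
\end{itemize}
Since $\Phi$ is acyclic (every edge goes from a lower index to a higher one), I argue by induction up the order $D_1,\dots,D_6$. Each layer's enabling conditions are met by layers already established, so no external regulator is invoked at any stage. This closes the second conclusion.

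\textbf{Step 3: the three theses.} I apply Proposition~\ref{prop:entailment}, which asserts that voluntary, polycentric, consent-based governance satisfies all three theses, with operational viability as its proviso. I must verify that a compliant stack is an architecture of that kind:
\begin{itemize}
\item Authority over each context rests on ongoing consent: trusts are voluntary (Definition~\ref{def:trust}), exit is unconditional, delegates are recallable, and clause (iii) of Definition~\ref{def:governance} permits federation withdrawal.
\item No single authority spans all contexts: clauses (i)--(ii) of Definition~\ref{def:governance} bar the council from legislating, and cross-context flows require bilateral consent (Definition~\ref{def:federation}).
\end{itemize}
The viability proviso I discharge by appeal to the compliance hypothesis itself. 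A stack whose predicates all evaluate to $1$ has, by the definition of each $v_{ij}$, institutional mechanisms actually in place; I would state this explicitly as the sense in which viability is assumed rather than proved.

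\textbf{Main obstacle.} The hardest part is Step 3's residual gap. Layer D4 admits service providers $s$, and the stack as a whole admits large funders, who are not themselves members bound by consent. The non-instrumentalization thesis concerns persons subject to the will of others without consent. My plan is to argue that providers exercise no authority over members: they hold only contractual access that the trust can revoke. Hence no non-consensual authority relationship arises inside the stack, and this is the point where the theorem's claim is most contestable.
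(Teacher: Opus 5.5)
Your argument is correct in substance, but it is organized differently from the paper's proof.

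\textbf{How the two routes differ.} The paper gets the first two conclusions together by citing Propositions~\ref{prop:d1}--\ref{prop:d6}, each of which meets its layer through voluntary mechanisms. It then checks the three theses \emph{directly} against the architecture:
\begin{itemize}
\item no coercion, with every relationship resting on consent, exit, reputation or mandated delegation;
\item constitution by voluntary associations;
\item no institution with cross-context jurisdiction, and a council that cannot override assemblies.
\end{itemize}
You instead make ``implements'' purely definitional, which the convention after Definition~\ref{def:compliance} permits. You give the non-State clause an explicit argument, and you route the theses through Proposition~\ref{prop:entailment} by verifying its hypotheses.

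The facts you check are the ones the paper uses. Your route is also more modular, since the paper's proof essentially re-runs the argument following Proposition~\ref{prop:entailment} inline.

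\textbf{What the detour costs.} Going through Proposition~\ref{prop:entailment} imports its operational-viability proviso. The theorem does not carry that proviso, and the paper treats viability as an empirical matter argued from existing institutions, not as a consequence of $V_i=1$. The direct check avoids the proviso because the comparative is built into the theses themselves. For instance, the second sentence of Definition~\ref{ax:noninst} gives it outright for an architecture constituted entirely through voluntary association. So you need not read viability into the predicates.

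\textbf{The induction in Step~2.} It is unnecessary, because compliance supplies all six layers at once. The non-State conclusion only needs each enabling condition to be met by \emph{some} layer of the stack.

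It is also inaccurate as stated, because the support the layer arguments invoke does not run along $\Phi$:
\begin{itemize}
\item D4's reputation transparency draws on the D3 federation, although $\Phi(D_3,D_4)=0$.
\item The Tucker paragraph in Section~\ref{sec:sovereignty_layer} makes D2's sovereignty require D6's mutual credit. Then D2, D3, D5, D6 close a cycle of presupposition, and no induction up $D_1,\dots,D_6$ is available.
\end{itemize}

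\textbf{Your main obstacle.} Revocable contractual access handles service providers. It does not handle supply-contributing funders, whose lever is the threat of withdrawal rather than access the trust can revoke. Inside the theorem, what neutralizes that lever is the compliance hypothesis itself. It works through condition~\eqref{eq:energy} of Definition~\ref{def:convivial} and the material independence D2 takes from D1. That is precisely the condition the paper shows failing under single-sourced supply (Proposition~\ref{prop:structural} and the reply to the material-base objection).
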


\begin{proof}
By Propositions~\ref{prop:d1}--\ref{prop:d6}, each layer of the Dignity Stack satisfies the normative requirements of the corresponding dimension of the digital social contract through voluntary institutional mechanisms. It remains to verify that the integrated architecture satisfies the three foundational axioms. The non-instrumentalization thesis (Definition~\ref{ax:noninst}): no layer of the Dignity Stack employs coercion. The governance mechanisms are consent (D1 assemblies, D2 trust membership), exit (D2 exit function, D4 termination rights), reputation (D4 federation-wide reporting), and mandated delegation (D5 nested assemblies). All institutional relationships are constituted through ongoing, meaningful, and revocable consent, satisfying the thesis more fully than any architecture that includes coercive authority. The relational personhood thesis (Definition~\ref{ax:relational}): the entire architecture is constituted through voluntary associations, from the local data trust through the federation to the mutualist economy, in which persons relate to one another as members of a community of mutual recognition rather than as subjects of an authority. The data trust is not a bureaucratic intermediary but a self-governing community in which data stewardship is an expression of mutual care. The structural contextual integrity thesis (Definition~\ref{ax:context}): no institution in the Dignity Stack has jurisdiction across all contexts. Each data trust governs a specific context, and cross-contextual data flows require bilateral negotiation through the federation protocol (D3). The nested assembly governance (D5) coordinates between contexts but cannot override contextual norms, because assemblies retain sovereignty over their internal governance and the right to withdraw from the federation. The inter-layer dependencies (Definition~\ref{def:dependency}) ensure that each layer's institutional mechanisms are supported by the layers beneath it, preventing the systemic failure that would occur if, say, a data trust claimed sovereignty while operating on corporate infrastructure.
\end{proof}

The architecture admits one further question that the layer-by-layer construction does not yet answer and that the objections of Section~\ref{sec:objections} press hardest. If the commons may be funded by firms and States, what prevents the largest funders from converting their contributions into control, reproducing within the commons the very concentration of power the digital social contract exists to resist? The answer is partly structural and partly empirical, and the boundary between the two must be drawn with care, because it is here that an optimistic reading of the architecture is most tempting and least warranted.

It is necessary first to distinguish two forms of capture, since they require different defences and conflating them is the characteristic error of optimistic commons design. \emph{Formal} capture is the acquisition of governance entitlements: additional votes, a veto, or a residual claim on the commons' surplus or direction. \emph{Structural} capture is the acquisition of leverage without any formal entitlement, through the dependence of the commons on a resource that a contributor remains free to withdraw. The governance rules of the architecture bear directly on the first and only indirectly on the second.

\begin{definition}[Capital--governance decoupling]
\label{def:decoupling}
A commons exhibits \emph{capital--governance decoupling} when the right to participate in its governance is held equally by its participating members and is not a function of the capital, compute, or data that any party contributes to it. Contributions are of two kinds, which the definition keeps distinct. An \emph{endowment} contribution transfers an asset to the commons once and for all and, once transferred, creates no continuing dependence. A \emph{supply} contribution provisions an ongoing operational input, such as compute or energy, that the contributor continues to control at its source and may cease to provide. In neither case does a contribution confer additional vote, veto, or residual claim on the commons' surplus or direction beyond terms set by the assembly; but a supply contribution creates a dependence whose governance significance the formal rules do not by themselves dissolve.
\end{definition}

\begin{proposition}[Capitalization without formal capture]
\label{prop:capture}
A Dignity Stack that satisfies capital--governance decoupling (Definition~\ref{def:decoupling}) absorbs capital and compute from firms and States without transferring \emph{formal} governance entitlements to those contributors, provided three conditions hold: (i) governance proceeds by one participant, one voice, with mandated and recallable delegation (D5); (ii) contributed capital earns no rent and no residual claim on surplus beyond contractually agreed repayment, in the mutualist sense (D6); and (iii) the assembly's own rules forbid it to accept any contribution conditioned on a relaxation of the dignity-by-design, consent, contextual-integrity, or economic-justice requirements of the digital social contract.
\end{proposition}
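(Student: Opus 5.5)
The plan is to make formal capture precise as a disjunction and then rule out each disjunct using exactly one of the three conditions. Following the paragraph preceding Definition~\ref{def:decoupling}, a contributor $c$ achieves \emph{formal} capture if, through its contribution, it acquires at least one of three governance entitlements: (a) an additional vote or weighted voice, (b) a veto over assembly or council decisions, or (c) a residual claim on the commons' surplus or direction. Proving the proposition then means showing that none of (a)--(c) can arise from any contribution, whether it is an endowment or a supply contribution. The argument is by exhaustion over these three channels.

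For channel (a), I will invoke condition (i) together with Definition~\ref{def:governance}. Voice is assigned per participant, and delegates are mandated and recallable through $\rho$. So a contributor gains at most the single voice it would hold as a participant, whatever it contributes. It also cannot act through a delegate beyond that delegate's mandate, since clauses (i) and (ii) of the nested structure forbid the council from legislating over assemblies.

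For channel (c), I will invoke condition (ii) together with Definition~\ref{def:economy}. The distribution $\sigma$ is fixed by the assembly and allocates surplus only among members. Contributed capital carries only contractually agreed repayment, with no rent and no residual claim. Hence no contributor holds a claim on surplus or direction beyond a fixed, discharged obligation, which matches the final sentence of the proof of Proposition~\ref{prop:d6}.

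Channel (b), the veto, is the one I expect to be the main obstacle. A contributor might avoid any explicit vote or claim and instead write governance leverage into the terms of the contribution itself, for example ``funding conditional on X.'' This is where condition (iii) does its work. Any contribution conditioned on relaxing the DSC requirements is inadmissible by the assembly's own rules. I will argue that any other condition has force only by the assembly's acceptance, so it is a term set by the assembly, exactly as Definition~\ref{def:decoupling} permits, and not an entitlement of the contributor.

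The delicate point is to state honestly what this step does not cover. The threat of withdrawing a supply contribution confers leverage without conferring any entitlement. I will therefore close by noting that this threat is structural capture and lies outside the proposition, and that it is addressed only by polycentric substitutability, which the discussion that follows takes up.
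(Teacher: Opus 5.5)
\textbf{Gap in the veto step.} Your decomposition by type of entitlement (extra vote, veto, residual claim) is close to the paper's. The paper closes the vote and residual-claimant channels by (i) and (ii) ``by construction'' and uses (iii) for a third channel, conditional acceptance. The problem is in your channel (b). There you route the veto through (iii), and then say that any condition which is not a relaxation of the DSC requirements ``has force only by the assembly's acceptance, so it is a term set by the assembly \dots\ and not an entitlement of the contributor.'' That step fails on exactly the case it must exclude: a contribution conditioned on the funder holding a standing veto over future decisions.

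Such a condition is not, on its face, a relaxation of the dignity-by-design, consent, contextual-integrity, or economic-justice requirements listed in (iii). So your reading of (iii) does not bar it. Once accepted, it is a veto held by the contributor because of its contribution, which is formal capture as defined just before Definition~\ref{def:decoupling}. Acceptance by the assembly does not turn an entitlement into a non-entitlement. Indeed, reading ``beyond terms set by the assembly'' in Definition~\ref{def:decoupling} as licensing such terms would make the proposition false. A veto granted by standing rule, rather than through a contribution's terms, is also not touched by (iii) at all.

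\textbf{Repair.} Put the veto where the paper puts it. A veto attached to a contributor as such is an unequal voice, so it is excluded by (i), as your own channel (a) already says: the contributor gets ``at most the single voice it would hold as a participant.'' Because (i) makes governance rights independent of contribution, no rule and no accepted term can attach a veto to a contributor. Condition (iii) then does separate work. The paper reads it broadly, as forbidding any contribution offered in exchange for a governance concession, not only DSC relaxations. With that reassignment your exhaustion argument coincides with the paper's.

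The paper also states explicitly what your step (b) silently leans on. The conclusion concerns only what the rules foreclose, and it presupposes an assembly not already captured and still disposed to enforce (iii). That caveat belongs alongside your closing remark that withdrawal threats are structural capture.
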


\begin{proof}
Conditions (i) and (ii) define governance rights and surplus claims independently of contribution, so a contributor, whatever the magnitude of its capital, acquires the same single voice as any other participant and no claim on surplus or direction; the vote channel and the residual-claimant channel are closed by construction. Condition (iii) closes the channel of conditional acceptance: a contribution offered in exchange for a governance concession is one the assembly's constitutive rules forbid it to accept. What the three conditions establish is therefore exactly their conjunction, that no formal governance entitlement is transferred through votes, through surplus, or through accepted conditions. They do not establish that the assembly will in fact apply rule (iii), which presupposes an assembly not already captured; the proposition states what decoupling forecloses, not that an assembly remains disposed to enforce it, and that residual is taken up below. The conditions are individually necessary, since dropping any one reopens its channel, and jointly sufficient against formal capture alone. Structural capture lies outside their scope, as the next paragraphs make explicit.
\end{proof}

The confinement to formal capture is deliberate, because the channel that the financing of fabrication plants and gigawatt data centres actually opens is structural. A contributor that supplies a majority of the commons' ongoing operational compute need acquire no vote and demand no surplus to exercise control; it need only retain, as Definition~\ref{def:decoupling} grants every supply contributor, the ability to cease supplying. The threat of withdrawal is a control lever that conditions (i) through (iii) leave untouched, and it is precisely the lever that the historical record of resource-asymmetric cooperatives and foundations shows to be decisive. The backstop on which formal-capture resistance relies, the exit and federation-exclusion mechanisms of D2 and D3, is moreover weakest exactly where structural capture is strongest: a commons whose compute is a single funder's gigawatt data centre has, by the network-effect and sunk-cost argument we concede in Section~\ref{sec:objections}, nowhere to exit to. The convivial threshold of Definition~\ref{def:convivial} was the original safeguard against this dependence; the present section does not abolish it but relocates it, as we make explicit both here and in our reply to the material-base objection.

\begin{proposition}[Resistance to structural capture is bounded by polycentric supply]
\label{prop:structural}
Structural capture of a Dignity Stack is resisted to the degree that its essential operational supply is polycentric: that no single contributor provisions more than a bounded share of any essential ongoing input, and that inputs are substitutable, so that the withdrawal of any single supplier is survivable through redundancy or replacement. Resistance degrades continuously as supply concentrates and is absent in the limit of monopoly supply, whatever the formal governance rules.
\end{proposition}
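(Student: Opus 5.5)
The plan is to make the statement precise enough to argue, then prove its three clauses in order: resistance where supply is polycentric, continuous degradation as supply concentrates, and absence under monopoly.

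\emph{Formalisation.} Fix the essential operational inputs $I_1,\dots,I_m$, such as compute and energy, each necessary for the stack to run. For each input $I_r$, let $s_{r,q}\in[0,1]$ be the share supplied by contributor $q$, with $\sum_q s_{r,q}=1$. Let $\kappa_r\in[0,1]$ be the fraction of a withdrawn share that the commons can replace within its deliberative horizon $T_{\text{assembly}}$, by redundancy or substitution. Model a contributor's structural leverage as the operational loss it can credibly threaten. For contributor $q$ this is
\begin{equation}
L_q=\max_r\, s_{r,q}(1-\kappa_r).
\end{equation}
Resistance to structural capture is then a decreasing function of $\max_q L_q$. The commons yields to $q$'s demand whenever complying costs less than the loss $q$ can threaten, and the assembly cannot refuse a demand whose refusal stops the stack.

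\emph{Formal rules do not help.} By Definition~\ref{def:decoupling}, a supply contributor keeps the right to cease supplying. Conditions (i)--(iii) of Proposition~\ref{prop:capture} constrain only votes, surplus claims and conditioned acceptance, and none of them enters $L_q$. So $L_q$ is independent of the formal governance rules, which gives the clause ``whatever the formal governance rules''.

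\emph{Monotonicity and the limit.} First, $L_q$ is continuous and nondecreasing in each share $s_{r,q}$ and nonincreasing in $\kappa_r$. Hence resistance degrades continuously as supply concentrates. Second, if every share satisfies $s_{r,q}\le\beta$, then $L_q\le\beta$. A bounded withdrawal is survivable once the threatened loss stays below the stack's operating slack. The D2 exit and D3 federation backstops then make refusal credible, so the demand can be resisted. Third, under monopoly supply, $s_{r,q}=1$ for some essential $r$. If in addition $\kappa_r=0$, then $L_q=1$: withdrawal halts the stack, and no governance rule can avert compliance. Resistance is therefore absent in the limit.

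The main obstacle is the first step: justifying that leverage is captured by the threatened loss alone. Coalitions of suppliers could coordinate withdrawal, and large contributions may carry reputational sunk costs for the supplier itself. I would handle coalitions by replacing single shares with the summed shares of any coalition able to coordinate. Because this is a philosophical rather than a deductive claim, I would state the model openly as an idealisation and present the conclusion as conditional on it.
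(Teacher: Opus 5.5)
Your argument is correct as a modelled version of the claim, and its two load-bearing observations are the paper's own. First, by Definition~\ref{def:decoupling} a supply contributor keeps the right to cease supplying. Second, conditions (i)--(iii) of Proposition~\ref{prop:capture} leave that lever untouched.

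The route differs because the paper gives no proof at all; it supports the proposition discursively. It appeals to the historical record of resource-asymmetric cooperatives and foundations for the decisiveness of the withdrawal threat. It invokes the network-effect and sunk-cost concession of Section~\ref{sec:objections} for the failure of the D2 and D3 exit backstops under concentrated supply, since a single funder's data centre leaves ``nowhere to exit to.'' It then simply reads single-source supply as ``capture in all but name.''

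Your index $L_q=\max_r s_{r,q}(1-\kappa_r)$ buys three things the paper lacks. It separates the two halves of polycentricity, bounded share and substitutability, into independent parameters. It makes ``degrades continuously'' a checkable property rather than an assertion. And through your coalition remark it exposes that a single-contributor bound is not sufficient against coordinated withdrawal, a point the paper never raises.

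The cost is that the model does much of the work:
\begin{enumerate}
\item Continuity comes from the functional form you chose; it is not derived.
\item The monopoly clause needs monopoly defined as $s_{r,q}=1$ together with $\kappa_r=0$. This fits the paper's market sense of the word, but you should state it as a definition.
\item Your yield rule replaces the paper's empirical premise with a rational-choice one.
\end{enumerate}

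Tighten one sentence. Condition (iii) does constrain behaviour: it obliges the assembly to refuse a conditioned supply. So under monopoly a faithful assembly can still refuse, at the price of the stack halting, and ``no governance rule can avert compliance'' is false as written. Define resistance as the capacity to refuse while remaining operational. Also note that your yield rule builds in the very failure of disposition that the paper keeps apart as a residual risk, distinct from structural capture. With that definition, $L_q=1$ gives absence of resistance. The independence of $L_q$ from the rules, holding $(s,\kappa)$ fixed, then gives the ``whatever the formal governance rules'' clause.
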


This proposition is bounded by design, and the bound is the honest content of the architecture's claim about the lower layers. The Dignity Stack does not establish that frontier-scale infrastructure can be funded by a single hyperscaler or State without capture; it establishes the opposite, that single-source funding of essential infrastructure is capture in all but name, and that the commons resists it only insofar as it can multi-source its supply. The material-base concession of earlier framings is therefore displaced rather than dissolved: the lower layers are fundable by firms and States without capture not at arbitrary concentration but only under a polycentric-supply condition that the present structure of the chip and data-centre markets makes demanding to satisfy. Open hardware, pooled and substitutable compute, and a plurality of competing suppliers are not peripheral conveniences but the precondition on which the whole claim rests, and where they are absent the claim fails.

It is in this bounded sense, and only in it, that the Dignity Stack functions as a shared civic battery rather than a private asset. A battery charged from a single source that can be switched off is controlled by that source regardless of who nominally owns it; a battery charged from many substitutable sources is not. The governance rules of Proposition~\ref{prop:capture} keep capital from buying direction outright, but it is the polycentric-supply condition of Proposition~\ref{prop:structural} that keeps capital from commanding direction through the threat of withdrawal, and only the two together make the metaphor honest.

We state the residual plainly, because two distinct risks remain even when both propositions are satisfied. The first is that condition (iii) requires an assembly disposed to enforce it, and a sufficiently dependent assembly may not be: capital--governance decoupling is a design commitment, not a law of nature, and the classical thesis of cooperative degeneration records the repeated erosion of exactly such commitments under sustained resource asymmetry, as large contributors convert provisioning into informal influence, staff dependence, and the slow rewriting of rules in their favour. The second is that the polycentric-supply condition may be unattainable at the fabrication layer, where the market is at present close to a monopoly, so that the bounded claim of Proposition~\ref{prop:structural}, though valid, may have no satisfiable instance at that layer for the foreseeable future. We do not argue these risks away. We claim only that locating them precisely, the first at the disposition of the assembly and the second at the concentration of supply, and equipping the commons with one-voice governance, no-rent rules, conditional-acceptance rules, multi-sourcing, and exit, is a stronger and more honest position than either conceding the lower layers to the incumbents outright or pretending that funding at scale carries no risk of capture.

\section{Objections and Responses}
\label{sec:objections}

The voluntary governance architecture proposed in this paper invites several serious objections. We address eight, including those we consider most damaging, and respond to each with the honesty that rigorous philosophical argument demands.

The first objection is the \emph{enforcement problem against powerful adversaries}: without the State's coercive apparatus, what prevents actors with State-level resources, whether multinational corporations with hundred-billion-dollar revenues or national intelligence agencies, from systematically violating the Dignity Stack's norms? This is the strongest version of the enforcement objection, and we do not underestimate it. The objection has two forms that must be separated. In the first, a powerful actor contributes capital or compute to the commons and seeks to convert its contribution into control; this form is answered by the capital--governance decoupling of Proposition~\ref{prop:capture}, which denies a funder any governance purchase that its capital might otherwise buy. In the second, a powerful actor external to the commons violates its norms by force or fiat; it is this second form that the present objection presses, and we address it directly. Ostrom's empirical research demonstrates that commons governed by communities can be stable and effectively maintained~\citep{ostrom1990}, but Ostrom's cases, fisheries, forests, and irrigation systems, involve defectors who lack the power to overwhelm the governance structure. A corporation that decides to infiltrate a data trust, or a State agency that demands access to trust data under national security authority, poses a categorically different challenge. Our response is twofold. First, the enforcement mechanisms of the Dignity Stack, which comprise exit, reputation, graduated sanctions, and federation exclusion, do impose real costs on violators and are effective against the range of ordinary violations that constitute the bulk of data governance failures. Second, we acknowledge that the Dignity Stack cannot, by itself, resist an adversary that is willing to use the coercive power of the State to override its governance. This is a genuine limitation, and one the commons shares with every private owner of the same infrastructure, none of whom can resist a determined State either; the commons does not seek to evade the State but to coexist with it. It means that the Dignity Stack operates most effectively either within a legal environment that recognizes the right of voluntary associations to govern their own affairs, which is Ostrom's seventh design principle (``minimal recognition of rights to organize''), or in domains where State power is practically limited. We do not claim that voluntary governance is sufficient against all adversaries; we claim that it is more effective than regulatory enforcement against the vast majority of data governance violations, while being normatively superior on the dimensions of dignity, consent, and contextual integrity. The statist alternative, for comparison, is also unable to enforce data rights against the State itself: intelligence agencies operate outside civilian data protection regimes by design, and GDPR fines, even the record \texteuro{}1.2 billion imposed on Meta, represent approximately 4\% of annual revenue, a cost of doing business rather than a meaningful deterrent.

The second objection is the \emph{scalability problem}: can voluntary governance scale to the level required for a comprehensive data governance system? The Dignity Stack does not require a single governance unit to scale to millions of members; it requires many small governance units to federate. Federation is a proven scalability mechanism: the Internet itself is a federated architecture in which autonomous systems coordinate through bilateral agreements (BGP peering). The Fediverse (Mastodon, Lemmy, PeerTube) demonstrates that social media can operate on a federated model with millions of users and thousands of independently governed instances. However, intellectual honesty requires acknowledging that the Fediverse also illustrates governance challenges: documented difficulties with cross-instance content moderation, loss of social capital during migration between instances, and undocumented governance norms that create barriers to entry. These are real problems, not fatal ones: they indicate that federated governance requires ongoing institutional learning and adaptation, not that it is infeasible. The cooperative movement, comprising over 3 million cooperatives worldwide with over 1 billion members, demonstrates that voluntary governance at scale is an institutional reality, not a hypothesis.

The third objection is the \emph{defection and data irreversibility problem}: what prevents individuals or trusts from free-riding on the cooperative system, and what happens when data, once shared, cannot be ``returned''? The Ostromian governance framework addresses ordinary defection through clearly defined boundaries, monitoring, graduated sanctions, and federation exclusion. The more serious version of this objection, however, concerns the irreversibility of data in derived artifacts. If a trust uses members' data to train a machine learning model before a member exits, the exit function (Definition~\ref{def:trust}), which guarantees withdrawal of raw data with no copy retained, does not address the information that has been incorporated into model weights. Trained models are, in a meaningful sense, copies of their training data. We acknowledge this as a genuine limitation of the exit mechanism as currently defined. The honest response has three parts. First, this problem is not unique to the commons-based framework; it applies with equal force to the GDPR's right to erasure and to the DSC's proposed data trusts. No existing governance framework has solved the problem of ``unlearning'' trained models. Second, emerging technical work on machine unlearning and differential privacy offers partial mitigation, and the Dignity Stack's open-protocol infrastructure (D1) facilitates the adoption of such techniques. Third, the trust's governance rules ($R$) can specify ex ante constraints on model training, including requirements for differential privacy guarantees, model auditing, and restrictions on the retention of models trained on departed members' data. The limitation remains, but it is a shared limitation of all data governance frameworks, not a specific weakness of the commons-based alternative.

The fourth objection is the \emph{tension between individual sovereignty and collective data governance}: if data is an expression of individual personhood (as data personalism claims), how can it be governed collectively through data trusts? The tension is real but is not unique to the commons-based framework; it is inherent in the digital social contract itself, which simultaneously affirms individual data sovereignty and proposes collective governance through data trusts. The commons-based resolution is the exit function: the individual's sovereignty over their data is realized through the unconditional right to withdraw from the trust with a complete copy of their data, subject to the model-weights limitation acknowledged above. Collective governance is thus conditional on ongoing individual consent, and the individual retains ultimate authority over their data at all times. This is a stronger protection than the statist alternative offers: a citizen cannot withdraw from a State's data protection regime without physically relocating to another jurisdiction, a vastly more costly and disruptive exit than withdrawing from a data trust.

The fifth objection is the \emph{tyranny of structurelessness}: Jo Freeman's influential analysis~\citep{freeman1972} argues that the absence of formal authority structures does not eliminate power but merely renders it invisible, unaccountable, and concentrated in the hands of informal elites. Applied to data trusts, this objection warns that voluntary governance may produce informal hierarchies that are harder to challenge than formal ones. The objection is well-taken, and the Dignity Stack's design incorporates Freeman's own prescriptions: explicit governance rules ($R$ in Definition~\ref{def:trust}), transparent decision-making through assemblies, mandated delegation with recall (Definition~\ref{def:governance}), and the exit function as an individual safeguard against informal capture. The Dignity Stack is not structureless; it is structured through explicit, democratically determined rules rather than through imposed external authority. Freeman's critique targets groups that claim to have no structure while operating through invisible hierarchies; the Dignity Stack's formal governance architecture is designed precisely to prevent this pathology.

The sixth objection is the \emph{tragedy of the anticommons}: Michael Heller~\citep{heller1998} argues that excessive fragmentation of governance rights can produce gridlock and underuse, as multiple rights-holders each possess veto power over productive uses of a shared resource. Applied to the Dignity Stack, this objection warns that a proliferation of context-specific data trusts, each with the power to refuse cross-contextual sharing, might prevent beneficial uses of data that require aggregation across contexts, such as public health surveillance, climate modelling, or educational research. The objection identifies a genuine risk. The federated mutual-aid protocol (D3, Definition~\ref{def:federation}) is designed to mitigate it: trusts share data for mutual benefit through bilateral agreements, and the norm of reciprocity creates incentives for cooperation. The risk of anticommons gridlock is real but must be weighed against the risk of the alternative: a unitary authority with the power to override contextual norms may enable beneficial data aggregation, but it also enables the surveillance, profiling, and contextual boundary violations that the digital social contract exists to prevent. The Dignity Stack accepts slower, negotiated data sharing as the price of structural contextual integrity.

The seventh objection concerns the \emph{network effects of exit}: Hirschman~\citep{hirschman1970} himself warned that exit alone can destroy institutions, and that effective governance requires both exit and voice. In digital ecosystems, network effects raise the cost of exit: a data trust's value to its members depends on how many members it has and how much data it aggregates, so that exiting means losing access to collective bargaining power, shared analytical resources, and social connections. We accept this objection as identifying a real constraint on the exit mechanism. The Dignity Stack's response is structural: by requiring open protocols and community-governed infrastructure (D1, D2), it reduces switching costs relative to the corporate platforms where network effects are most severe. Moreover, the nested assembly governance (D5) provides voice mechanisms, democratic participation, mandated delegation, and recall, alongside exit. The Dignity Stack does not rely on exit alone; it combines exit with robust voice mechanisms. The remaining network-effect costs of exit are a genuine limitation, and we do not claim otherwise.

The eighth objection is the \emph{material base problem}: even granting that the governance protocols of the Dignity Stack are normatively superior, the lower layers of the very stack this paper invokes resist decentralization for reasons of capital and physics rather than ideology. \citet{huang2026} partitions the contemporary AI system into energy, chips, infrastructure, models, and applications, and the lower three of these layers exhibit economies of scale and lock-in that no voluntary community can plausibly overcome: leading-edge fabrication requires facilities whose cost runs to tens of billions of dollars, frontier training requires data centres drawing power on the order of gigawatts, and the dominant accelerator vendor's software ecosystem constitutes a moat sustained by millions of developers whose accumulated tooling raises the switching cost of any open alternative. The convivial threshold of Definition~\ref{def:convivial}, which requires that a system's energy demand not exceed what its community can self-provision and that its workings remain auditable by that community, appears to resolve this problem only by definitional fiat: it restricts the Dignity Stack to AI systems small enough to be community-powered and community-understood, thereby conceding the frontier of capability, precisely the systems that concentrate economic and geopolitical power, to the incumbents the architecture was meant to displace. On this reading the commons would govern the periphery while the commanding heights remained in corporate and State hands.

The objection draws its force from an assumption the commons does not make: that to govern a layer of the stack one must own and operate it oneself, so that a community unable to build a fabrication plant must cede the fabrication layer to whoever can. The battery model of Section~\ref{sec:integration} rejects exactly this identification of ownership with governance. A community need not fabricate its own silicon, nor raise the tens of billions a leading-edge plant costs, in order to govern its relation to silicon on commons terms. Under capital--governance decoupling (Proposition~\ref{prop:capture}), the capital for fabrication plants, gigawatt data centres, and frontier training runs can come from firms and States, precisely the actors with the means to supply it, while the governance of the resulting capacity, namely who may use it, under what consent and contextual-integrity norms, and how its surplus is distributed, remains with the commons. Capital purchases throughput; it does not purchase direction. The frontier of capability is on this account fundable without being captured, and the commanding heights need not remain in corporate and State hands merely because corporate and State capital built them.

This reframing changes the status of the convivial threshold of Definition~\ref{def:convivial}, and honesty requires distinguishing which of its three conditions the change reaches and which it does not. The epistemic and political conditions, the knowledge relation~\eqref{eq:knowledge} and the deliberation relation~\eqref{eq:dismantle}, are met at the level of the federation rather than the single community: a single community cannot by itself audit a frontier model or oversee a hyperscale data centre, but a federation can pool the technical competencies required and exercise mandated, recallable oversight (D5) over a system that exceeds any one of its members. To this extent the threshold is not a ceiling that confines the commons to small systems but a requirement that auditing and oversight capacity scale, at the federation level, with the system governed. The material condition is a different matter. The energy relation~\eqref{eq:energy} demands self-provisioning, and a gigawatt data centre funded and powered by a hyperscaler satisfies it for no community and no federation; such a system has governance sovereignty, in that the federation can audit and in principle decommission it, but not material sovereignty, in the strict sense of Definition~\ref{def:convivial} and the D1 to D2 dependency of Definition~\ref{def:dependency}, which holds that a provider able to terminate supply unilaterally compromises sovereignty. We therefore do not claim that the lower layers, as funded today, are sovereign in the full sense the convivial threshold defines. We claim the weaker and defensible thing: that governance sovereignty is attainable at federation scale now, and that material sovereignty is recovered only as the polycentric-supply condition of Proposition~\ref{prop:structural} is approached, through multi-sourced and substitutable provision. Where supply remains single-sourced, the convivial threshold is violated at condition~\eqref{eq:energy} and the sovereignty of that layer is partial, a fact we mark rather than reinterpret away.

Two considerations bound what remains. First, the convivial threshold retains a normative edge independent of capacity: much of the apparent necessity of frontier-scale compute reflects business models built on advertising, surveillance, and centralized capture rather than the intrinsic requirements of the tasks users actually need performed, so that the commons can decline to reproduce that scale where it is an artifact of extraction rather than of need. Second, the claim defended in this paper is falsifiable rather than self-sealing. It would be refuted if capital--governance decoupling could not in fact be sustained at fabrication and data-centre scale, that is, if funders of that magnitude proved able to convert their contributions into control despite the safeguards of Proposition~\ref{prop:capture}, or if commons-governed AI at the upper layers systematically failed to sustain itself where the legal environment permitted it under Ostrom's seventh design principle of minimal recognition of the right to organize~\citep{ostrom1990}. The honest residual is therefore not that the lower layers are unfundable, which the battery model answers, but that the decoupling on which their funding depends is, as Section~\ref{sec:integration} concedes, a governed commitment rather than a guarantee. We mark that boundary rather than conceal it.

\section{Conclusion}
\label{sec:conclusion}

The digital social contract proposed by Alvarez-Pallete et al.\ represents a significant philosophical advance: it grounds data governance in human dignity, identifies the categorical imperative and personalism as the relevant normative frameworks, and articulates six dimensions of governance that a dignity-centric digital social contract must address. This paper has argued that the normative content of that contract, while not strictly entailing any single institutional architecture, is more faithfully realized by voluntary, polycentric, commons-governed institutions than by the statist regulatory framework the contract presupposes. The three philosophical theses we have defended, concerning non-instrumentalization, relational personhood, and structural contextual integrity, do not derive mechanically from Kant, Mounier, or Nissenbaum; they draw those traditions in a commons-based and subsidiary direction, motivated by the empirical observation that actually existing States, like the private owners of the stack, are structurally implicated in the data extraction the contract seeks to prevent. In the case of Mounier the affinity is closest on the governance form, the commons supplying the kind of subsidiary mediation between person and State his personalism called for, though our economic layer extends past his acceptance of the market, a divergence we mark rather than elide. The Dignity Stack provides the constructive alternative: a six-layer governance architecture in which each dimension of the digital social contract is implemented by a commons-based organizational protocol drawn from the cooperative, mutualist, and libertarian-municipalist traditions, specified with sufficient precision to evaluate its operational viability.

The argument is not that voluntary governance of AI is easy, desirable because easy, or inevitable. It is that the values the digital social contract proclaims, specifically the irreducibility of human dignity, the requirement of meaningful consent, the prohibition on instrumentalization, the demand for economic justice, and the principle of contextual integrity, are more fully realized in institutions constituted by voluntary association than in institutions constituted by coercion, even when those coercive institutions are designed with the best of intentions. We have been candid about the limitations of the commons-based alternative: it cannot by itself resist adversaries with State-level coercive power, a limitation it shares with the private owners of the same infrastructure; its exit mechanism does not fully address the irreversibility of data in trained models; its federated structure risks anticommons gridlock; and its governance units face the classic challenges of voluntary association, including informal hierarchies and network-effect switching costs. The deepest of these, and the one that replaces the material-base concession of earlier framings, concerns the funding of the capital-intensive lower layers by firms and States. The governance rules of the architecture defeat formal capture, the acquisition of votes or surplus, but not structural capture, the leverage a dominant supplier holds through the threat of withdrawing what it provides; that second form is resisted only to the degree that operational supply is polycentric and substitutable, a condition the present concentration of the chip and data-centre markets makes hard to satisfy and perhaps, at the fabrication layer, for now impossible. Capital--governance decoupling is moreover a governed commitment rather than a guarantee, since it presupposes an assembly disposed to enforce it, and the classical pattern of cooperative degeneration shows such commitments eroding under sustained resource asymmetry. These are the principal risks the architecture must manage, and we neither resolve them here nor conceal them. These limitations are real, and we do not minimize them. But they must be weighed against the limitations of the statist alternative, which include regulatory capture, State complicity in surveillance, the structural power asymmetry between regulators and regulated corporations, and the inherent tension between a coercive enforcement mechanism and the dignity-centred values it is meant to protect.

The practical path from the present institutional landscape to the Dignity Stack is a question that exceeds the scope of this paper. What we have established is that the voluntary governance of AI data is not merely a utopian aspiration but a constructive possibility grounded in actually existing institutional models, from cooperative data trusts through municipal assemblies to federated protocols, and that it constitutes, on the philosophical terms the digital social contract itself articulates, a more faithful realization of the values it proclaims. The community-governed AI system that powers itself, hosts its own data, governs its own norms, chooses its own service providers, deliberates its own governance, and distributes its own surplus is not a utopian fantasy imposed on the digital social contract from outside. It is the digital social contract taken seriously.


\end{document}